\newcommand{\IR}{{\mathbbm{R}}}
\newcommand{\IN}{{\mathbbm{N}}}
\newcommand{\IZ}{{\mathbbm{Z}}}
\newcommand{\FP}{{\mathfrak{P}}}
\newcommand{\Fp}{{\mathfrak{p}}}
\newcommand{\CB}{{\mathcal{B}}}
\newcommand{\CE}{{\mathcal{E}}}
\newcommand{\CC}{{\mathcal{C}}}
\newcommand{\CD}{{\mathcal{D}}}
\newcommand{\CX}{{\mathcal{X}}}
\newcommand{\CW}{{\mathcal{W}}}
\newcommand{\loc}{{\mathrm{loc}}}
\renewcommand{\epsilon}{\varepsilon}
\renewcommand{\phi}{\varphi}
\renewcommand{\rho}{\varrho}
\renewcommand{\theta}{\vartheta}
\renewcommand{\tilde}{\widetilde}
\DeclareMathOperator{\supp}{supp}
\DeclareMathOperator{\diam}{diam}
\newtheorem{Proposition}{Proposition}[section]
\newtheorem{Theorem}[Proposition]{Theorem}
\newtheorem{Definition}[Proposition]{Definition}
\newtheorem{Lemma}[Proposition]{Lemma}
\newtheorem{Corollary}[Proposition]{Corollary}
\newtheorem*{remark}{Remark}
\newcommand{\Hmm}[1]{\leavevmode{\marginpar{\tiny%
$\hbox to 0mm{\hspace*{-0.5mm}$\leftarrow$\hss}%
\vcenter{\vrule depth 0.1mm height 0.1mm width \the\marginparwidth}%
\hbox to 0mm{\hss$\rightarrow$\hspace*{-0.5mm}}$\\\relax\raggedright #1}}}
\begin{document}
\title[]{Heat kernel estimates and related inequalities on metric graphs}
\author[]{Sebastian Haeseler$^1$}
\address{$^1$ Mathematisches Institut, Friedrich Schiller Universit\"at Jena,
  D- 07743 Jena, Germany, sebastian.haeseler@uni-jena.de.}
\begin{abstract}
We consider metric graphs with Kirchhoff boundary conditions. We study the intrinsic metric, volume doubling and a Poincaré inequality. This enables us to prove a parabolic Harnack inequality. The proof involves various techniques from the theory of strongly local Dirichlet forms. Along our way we show Sobolev and Nash type inequalities and related heat kernel estimates.
\end{abstract}
\date{\today} %
\maketitle
\section*{Introduction}
The aim of this article is to study solutions of the heat equation on a metric graph. A metric graph is by definition a combinatorial graph where the edges are considered as intervals, glued together according the combinatorial structure. Metric graphs have received lot of attention in recent years both of the point of view of mathematicians and application (see e.g. the conference proceedings \cite{Proc-05,Proc-08} and the survey \cite{Kuchment-04}).\\
Studying differential operators on metric graphs is similar to studying a system of differential expressions. However the combinatorial structure of the graph is closely related to the spectrum and the heat conductance on the metric graph, see for instance \cite{Cattaneo-95,KostrykinS-06,ExnerPost-05}. Our approach is somewhat different: equipping the (minimal) Laplacian with the so-called standard boundary conditions allows us to treat the metric graph as a strongly local Dirichlet space. We first show the validity of a family of functional inequalities such as Sobolev and Nash type inequalities. The latter one is known to be equivalent to the notion of ultracontractivity of the associated semigroup generated by the Laplacian. This leads to the existence of the heat kernel and an upper Gaussian bound of the heat kernel (see \cite{Davies-89}. This follows the work done in \cite{BoutetdeMonvelLS-07, LenzSchubertStollmann-08} and \cite{Haeseler-08}. \\
The main result of this article will be a parabolic Harnack inequality, which is in fact equivalent to two sided Gaussian estimates of the heat kernel. This also improves the upper Gaussian bound mentioned above in the sense the the first one does not reflect the graph structure whereas the bound coming from the Harnack inequality does. A similar upper bound was already obtained in \cite{Mugnolo-06} in the case of compact graphs, i.e. graphs where the edge set is finite. In \cite{KostrykinPS-07} the heat kernel was explicitly given via a combinatorial series expansion, but the class of graphs is restricted to generalized star graphs, i.e. compact graphs with a finite number of rays attached. Our result will generalize both in the sense that we allow a priori general infinite graphs to obtain asymptotic estimates for the heat kernel. Globally, as we will see, the underlying combinatorial structure will govern the heat conductance, but locally the graph behaves as in one dimension. Note that in \cite{BarlowBass-03} the question is adressed whether the parabolic Harnack inequality is stable for discrete graphs under changing the conductance matrices. Their method uses metric graph techniques, which is called cable system there. As a by-product they show that the parabolic Harnack inequality holds on the metric graph if and only if it holds on the discrete graph. Our approach is different here, since we emphasize on the metric graph while the combinatorial structure will not play a major role. Moreover the setting of Dirichlet spaces allows us to treat arbitrary positive weights on the edges, see subsection \ref{weights}.\\
Our method goes back to the work of Moser (\cite{Moser-61,Moser-63,Moser-71}). These ideas were generalized to the setting of Riemannian manifolds by Saloff-Coste (\cite{SaloffCoste-95,SaloffCoste-92}) and to the setting of strongly local Dirichlet spaces by Sturm (\cite{Sturm-95,Sturm-96}). We use these works to obtain the Harnack inequality. Part of our results goes back to the author's Diploma thesis from 2008. After this work was finished we learned about the recent article \cite{PivarskiSaloffCoste-08} and the preprint \cite{BendikovSaloffCosteSalvatoriWoess-09} which have some overlap with our results.\\
The paper is organized as follows: in section 1 we introduce the notion of metric graphs and the standard Laplacian on it. In section 2 several analytic inequalities are proven, in particular a Sobolev inequality and a Poincaré inequality. The latter one is one half of the necessary input for validity of the Harnack inequality. The first one however helps us also to obtain a Nash inequality and associated heat kernel upper bounds in form of an ultracontractivity estimate of the heat semigroup. In section 3 we will have a further look at the local character of the geometry of the metric graph in terms of volume doubling and Poincaré inequalities on balls. Hence this discussion leads then in section 4 to the statement of the Harnack inequality together with a rough sketch of the proof. Furthermore we state some of its important corollaries, among others we mention two sided Gaussian estimates for the heat kernel. Finally in the last section we have a look at some typical examples.
\section{Basic concepts and results}
\subsection{Metric graphs}
A metric graph $\CX_\Gamma$ consists, like combinatorial graphs, of a countable set $V$ of vertices and of a countable set of edges $E\subset V\times V\setminus\{(v,v)|v\in V\}$. In contrast to the discrete graph we want to consider the edges as intervals, glued together at the vertices. Let $l:E \to (0,\infty)$ the map equipping each edge with a (finite) length, $i:E\to V$ mapping each edge to its initial vertex and $j:E\to V$ mapping each edge to its terminal vertex according to the graph structure. We say that an edge $E$ is incident to a vertex $V$ if either $i(e)=v$ or $j(e)=v$. In order to get a topological structure, we follow \cite{BoutetdeMonvelLS-07} and \cite{Haeseler-08}: let $\CX_e := \{e\}\times(0,l(e))$, $\CX_\Gamma := V\cup \bigcup\limits_{e\in E} \CX_e$ and $\bar{\CX}_e := \CX_e \cup \{i(e),j(e)\}$. Now consider the canonical homeomorphism $\pi_e:\CX_e \to (0,l(e))$, $(e,t)\to t$ and extend it to a homeomorphism $\pi_e:\bar{\CX}_e \to [0,l(e)]$ such that $\pi_e(i(e)) = 0$ and $\pi_e(j(e)) = l(e)$. Note that this forces continuous functions $u\in \CC(\CX_\Gamma)$ to be continuous on each edge interior, i.e. the functions $u_e:= u \circ \pi_e^{-1}$ are continuous and the limits $\lim\limits_{x\to\pi_e(v)} u_e(x)$ are equal for all edges incident to $v$ for all $v\in V$. Furthermore we are able to define a metric structure on the graph. Fix $x,y\in \CX_\Gamma$ and let $\Fp\in \CX_\Gamma^N$ be a polygon connecting $x,y$, i.e. $\Fp_0 = x$, $\Fp_N=y$ and for all $k\in \{1,..,N-1\}$ there exists an edge $e_k$ such that $\Fp_k,\Fp_{k+1}\in \bar{\CX}_{e_k}$. Note that since we do not allow loops, this edge is unique and if we force the graph to be connected there is at least one polygon connecting $x,y$. The length of such a polygon is just the sum of all of its segments, i.e.
\[L(\Fp) := \sum_{k=1}^{N-1} |\pi_{e_k}(\Fp_{k+1}) - \pi_{e_k}(\Fp_k)|\]
and therefore we can define the metric as
\[d(x,y) := \inf \{L(\Fp) : \mbox{ $\Fp$  connects $x$ and $y$}\}.\]
To make sure that this defines a metric we assume from now on that the graph is connected and that the vertex degrees are finite for all vertices, that is
\[d_v := |\{e\in E: v\in \{i(e),j(e)\}\}| < \infty \mbox{ for all } v\in V.\]
In order to get a suitable measure on $\CX_\Gamma$, we define for $Y\subset \CX_\Gamma$
\[\int\limits_Y u(x) \: dm(x) := \sum_{e\in E} \int\limits_{e\cap Y} u(x) dm(x).\]
where $m$ is the the measure induced by the images of the Lebesgue measure on each $(0,l(e))$. The $L^p$ spaces are then given by
\[L^p(\CX_\Gamma,m)= \bigoplus\limits_{e\in E} L^p(0,l(e)).\]
For later purposes, we define the mean of a function $u$ to be
\[ \bar{u}_Y := \fint\limits_Y u(x)\: dm(x) := \frac{1}{m(Y)} \int\limits_Y u(x) \: dm(x).\]
\subsection{The energy form}
Our next aim is to define the energy form. For this purpose we define for $1\leq p \leq \infty$
\[W^{1,p}(\CX_\Gamma) := \bigoplus\limits_{e\in E} W^{1,p}(0,l(e)) \cap \CC(\CX_\Gamma)\]
and
\[W^{1,p}_0 (\CX_\Gamma):=W^{1,p}(\CX_\Gamma)\cap \CC_0(\CX_\Gamma),\]
where $\CC_0(\CX_\Gamma)$ denotes the closure w.r.t. $\|\cdot\|_\infty$ of the set of continuous functions with compact support.
The energy form is then given by
\[\CD = \CD(\CE) = W^{1,2}_0 (\CX_\Gamma),\]
\[ \CE(u,v) := \sum_{e\in E} \int\limits_0^{l(e)} u'_e(x) v'_e(x) \: dx.\]
One can show (see for instance \cite{Kuchment-04}) that this form is associated with the operator acting on functions as $-\frac{d^2}{dx^2}$ on each edge with domain
\[\left\{u\in \CC_0(\CX_\Gamma)\cap \bigoplus_{e\in E} W^{2,2}(0,l(e)) \ :\ \sum_{e\sim v} u'_e(v) = 0\right\},\]
where $u'_e(v)$ denotes the directional derivative towards $v$ along $e$. This operator is known as the Laplacian with Kirchhoff vertex conditions, which we will denote by $\Delta$.\\
Note that we implicitly impose Neumann boundary conditions on vertices with degree one. Furthermore although we explicitly do not allow loops, multiple edges and edges of infinite length, we implicitly do, since we can view any $x\in \CX_\Gamma\setminus V$ as a vertex with degree 2. Functions in the domain of $\CE$ will then automatically fulfill the continuity condition.\\
Note that, $\CE$ is a regular and strongly local Dirichlet form with energy measure
\[d\Gamma(u(x)) = |u'(x)|^2 dm(x) = \sum\limits_{e\in E} |u_e'(x)|^2 \: dx\]
(cf. \cite{BoutetdeMonvelLS-07}). This allows us to define functions locally in the domain, i.e.
\[u\in \CD_\loc :\Leftrightarrow  \ u\in L^2_\loc \mbox{ and } \forall \phi \in \CD\cap C_c(S): \phi u \in \CD.\]
Using the energy measure one can define the intrinsic metric $\rho$ by
\[\rho(x,y) := \sup \{ |u(x) - u(y)| : u\in W^{1,2}_{\loc}(\CX_\Gamma) \cap \CC(\CX_\Gamma) \mbox{ and } |u'| \leq 1 \}.\]
For a discussion of the intrinsic metric in the context of general strongly local Dirichlet forms see \cite{Sturm-94}. In our situation we have the following result:
\begin{Proposition}\label{intrinsic}
For all $x,y\in \CX_\Gamma$ we have
\[\rho(x,y) = d(x,y).\]
\end{Proposition}
\begin{proof} Let $y\in \CX_\Gamma$ and $d_y (x) := d(x,y)$, resp. $\rho_y (x) := \rho(x,y)$. Clearly: $d_y \in \CC(\CX_\Gamma)$ and for $x\in \CX_e$, we have
\[d_y(x) := \min\{ d(y,i(e)) + |\pi_e(x) - \pi_e(i(e))|, d(y,j(e)) + |\pi_e(x) - \pi_e(j(e))|\},\]
which gives $d\Gamma(d_y) \leq dm$ and therefore $d_y \in W^{1,2}_{loc} (\CX_\Gamma)$. Now by definition of $\rho$ we have $d_y(x) \leq \rho_y (x)$ for all $x,y\in \CX_\Gamma$.\\
Let now $x,y \in \overline{\CX}_e$ and $u\in W^{1,2}_{\loc}(\CX_\Gamma) \cap \CC(\CX_\Gamma)$ an arbitrary function with $|u'| \leq 1$. Then we have
\begin{eqnarray*}
|u(x) - u(y)| & =& |u_e (\pi_e(x)) - u_e(\pi_e (y))|\\
&\leq &\int\limits_{\pi_e(x)}^{\pi_e(y)} |u_e'(t)|\: dt\\
&\leq& |\pi_e(x)- \pi_e(y)|,
\end{eqnarray*}
i.e. $\rho(x,y) \leq |\pi_e(x) - \pi_e(y)|$ on $\CX_e$. Choose a path $\Fp$ with $L(\Fp) = d(x,y) + \epsilon$ for some $\epsilon \geq 0$. This yields
\begin{eqnarray*}
\rho(x,y) \leq \sum_{k=1}^{N-1} \rho(\Fp_k,\Fp_{k+1}) \leq \sum_{k=1}^{N-1} |\pi_{e_k}(\Fp_k) - \pi_{e_k} (\Fp_{k+1})| =  d(x,y)+\epsilon,
\end{eqnarray*}
where $e_k$ denotes the unique edge with $\Fp_k, \Fp_{k+1} \subset \CX_{e_k}$. Now the claim follows, since $\epsilon>0$ was chosen arbitrary.
\end{proof}
\subsection{More general models}\label{weights}
We can also treat more general forms as follows: let $c_e:\CX_e\to \IR$ measurable for all $e\in E$. Assume that there exists $\Lambda >0$ such that
\[0<\Lambda^{-1} \leq c_e(x)\leq \Lambda\]
for all $e\in E$ and $x\in \CX_\Gamma$. Then we define the form
\[\CD = \CD(\tilde{\CE}) = W^{1,2}_0 (\CX_\Gamma),\]
\[ \tilde{\CE}(u,v) := \sum_{e\in E} \int\limits_0^{l(e)} u'_e(x) v'_e(x) \: c_e(x)dx.\]
Then, we can compare the energy measure $d\tilde{\Gamma}$ of this form with the energy measure $d\Gamma$ of $\CE$: for $u \in \CD(\CE)=\CD(\tilde{\CE})$ we have
\[\Lambda^{-1} d\Gamma (u)\leq d\tilde{\Gamma}(u) \leq \Lambda d\Gamma(u).\]
This implies for instance the equivalence of the intrinsic metrics of both forms, i.e. for all $x,y\in \CX_\Gamma$ we have
\[\Lambda^{-1} \tilde{\rho}(x,y) \leq d(x,y) \leq \Lambda \tilde{\rho}(x,y).\]
Moreover the Nash, the Poincaré and the parabolic Harnack inequality, see sections 2.2, 3 and 4, together with their consequences hold true with constants depending additionally on $\Lambda$. We will show these results for the form $\CE$ and leave all modifications for the results on $\tilde{\CE}$ to the reader, as they are easily derived from the above mentioned comparison of the energy measures.
\section{One-dimensional inequalities}
\subsection{Sobolev and Poincaré type inequalities}
The next few results reflect the one dimensional behavior in terms of some useful Poincaré and Sobolev-type inequalities.\\
Note that the graph as metric space is compact if and only if the edge set is finite.
\begin{Lemma}
Let $\CX_\Gamma$ be a compact metric graph and $u\in W^{1,1} (\CX_\Gamma)$. Then there exists $c\in \CX_\Gamma$ such that for all $x\in \CX_\Gamma$ and each $\mathfrak{p}_c^x\in \FP$ connecting $c$ and $x$, we have
\[|u(x)-\bar{u}| \leq \int\limits_{\mathfrak{p}_c^x} |u'(y)| \: dm(y)\label{path}\tag{$\clubsuit$}\]
where $\bar{u} = \fint u \:dm$.
\end{Lemma}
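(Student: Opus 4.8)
The plan is to locate a distinguished point $c$ at which $u$ takes exactly its own mean, after which the inequality reduces to the fundamental theorem of calculus carried along a polygonal path.

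First I would exploit compactness. By the very definition of $W^{1,1}(\CX_\Gamma) = \bigoplus_{e}W^{1,1}(0,l(e))\cap \CC(\CX_\Gamma)$ the function $u$ is continuous, so on the compact space $\CX_\Gamma$ it attains a minimum $m_-:=\min u$ and a maximum $m_+:=\max u$. Since $m_-\le u(y)\le m_+$ for every $y$, integrating against $dm$ and dividing by the finite total mass $m(\CX_\Gamma)$ (finite because compactness forces a finite edge set) gives $m_-\le \bar u\le m_+$. As $\CX_\Gamma$ is connected and $u$ is continuous, the image $u(\CX_\Gamma)$ is a connected subset of $\IR$, hence an interval containing both $m_-$ and $m_+$ and therefore the intermediate value $\bar u$. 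This yields a point $c\in\CX_\Gamma$ with $u(c)=\bar u$, and — crucially — this single $c$ serves simultaneously for every $x$, as the statement demands.

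With $c$ fixed, I would decompose any path $\Fp_c^x=(\Fp_0,\dots,\Fp_N)$ with $\Fp_0=c$, $\Fp_N=x$ into consecutive segments, the $k$-th of which lies on a single edge $e_k$. On that edge $u$ restricts to a $W^{1,1}(0,l(e_k))$ function, hence admits an absolutely continuous representative, and the one-dimensional fundamental theorem of calculus gives
\[u(\Fp_{k+1})-u(\Fp_k)=\int_{\pi_{e_k}(\Fp_k)}^{\pi_{e_k}(\Fp_{k+1})} u'_{e_k}(t)\,dt.\]
Summing these identities along the path, the left-hand side telescopes; here the continuity of $u$ at the intermediate points $\Fp_k$ is exactly what makes the boundary contributions cancel, leaving $u(x)-u(c)$. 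Taking absolute values and bounding each integral by the integral of $|u'_{e_k}|$ produces
\[|u(x)-\bar u|=|u(x)-u(c)|\le \sum_{k}\int |u'_{e_k}(t)|\,dt=\int_{\Fp_c^x}|u'(y)|\,dm(y),\]
which is the asserted bound $(\clubsuit)$.

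I expect the only genuinely delicate point to be the existence of $c$, since it rests on the intermediate value property for the connected metric graph rather than for a single interval; I would therefore invoke the connectedness hypothesis on $\CX_\Gamma$ and the built-in continuity of $W^{1,1}$-functions explicitly. Everything after that is the standard fundamental theorem of calculus glued edge by edge along the path, with continuity at the vertices supplying the telescoping, so no further obstacle should arise.
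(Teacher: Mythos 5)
Your proof is correct and follows essentially the same route as the paper: first produce a point $c$ with $u(c)=\bar u$ by an intermediate value argument (the paper runs it along a path joining the minimizer and maximizer, you via connectedness of the image $u(\CX_\Gamma)$ — an immaterial difference), then apply the fundamental theorem of calculus edge by edge along the path. Your version also spells out the telescoping and the absolute-continuity details that the paper leaves implicit, which is fine but not a different argument.
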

\begin{proof}
Since $u$ is continuous it attains its maximum and minimum, i.e. $u(x_+)= \max u$ and $u(x_-) = \min u$. In particular $u(x_-)-\bar{u} \leq 0$ and $u(x_+) - \bar{u} \geq 0$. Without restriction let the inequalities be strict, otherwise we choose the point $c$ below as $x_-$ or $x_+$ resp. Since the graph is connected, there is a path $\mathfrak{p}_{x_-}^{x_+}$ connecting $x_-$ and $x_+$. Note that $u$ restricted to this path is continuous and changes the sign, hence there exists $c\in \mathfrak{p}_{x_-}^{x_+}$ such that $u(c)= \bar{u}$. Take now $x\in \CX_\Gamma$ arbitrary and a path $\mathfrak{p}_x^c$ connecting $c$ and $x$. Then by the fundamental theorem of calculus we have
\[|u(x)- \bar{u}| \leq \int\limits_{\mathfrak{p}_c^x} |u'(y)| \: dm(y).\]
\end{proof}
This result yields a whole family of Poincaré and Sobolev-type inequalities. Recall that the diameter of a metric space is defined as
\[ \diam \CX_\Gamma := \sup \{ d(x,y) : \ x,y\in \CX_\Gamma\}.\]
\begin{Corollary}\label{1.2}
Let $\CX_\Gamma$ be a compact metric graph and $u\in W^{1,p} (\CX_\Gamma)$ for $1\leq p < \infty$. Then
\[\|u-\bar{u}\|_q \leq (\diam \CX_\Gamma)^{1 - \frac{1}{p}} |\CX_\Gamma|^\frac{1}{q} \|u'\|_p \]
for arbitrary $1\leq p\leq \infty$ and $1\leq q\leq \infty$. In particular
\[\|u-\bar{u}\|_p \leq (\diam \CX_\Gamma)^{1 - \frac{1}{p}} |\CX_\Gamma|^\frac{1}{p} \|u'\|_p .\]
Furthermore $u$ fulfills the Sobolev inequality
\[\|u\|_\infty \leq |\CX_\Gamma|^{-\frac{1}{p}} \|u\|_p + (\diam \CX_\Gamma)^\frac{p-1}{p} \|u'\|_p.\]
\end{Corollary}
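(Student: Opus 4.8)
The plan is to derive all three inequalities from the preceding Lemma, since the bound $(\clubsuit)$ already controls the oscillation of $u$ along a single path; the only new ingredient is Hölder's inequality. Note first that on a compact graph $|\CX_\Gamma|<\infty$, so $W^{1,p}(\CX_\Gamma)\subset W^{1,1}(\CX_\Gamma)$ and the Lemma applies, furnishing a point $c\in\CX_\Gamma$ with $u(c)=\bar u$. The key first step is to upgrade $(\clubsuit)$ from an $L^1$ bound to an $L^p$ bound. Fixing $x$ and choosing $\Fp_c^x$ to be a geodesic path connecting $c$ and $x$ (which exists since a compact metric graph has finitely many edges), Hölder's inequality along the path gives
\[
|u(x)-\bar{u}|\leq\int_{\Fp_c^x}|u'|\,dm\leq\Big(\int_{\Fp_c^x}1\,dm\Big)^{1-\frac1p}\Big(\int_{\Fp_c^x}|u'|^p\,dm\Big)^{\frac1p}.
\]
Here the first factor is the length $L(\Fp_c^x)=d(c,x)\leq\diam\CX_\Gamma$, and since a geodesic traverses each edge at most once, the segment integrals sum to at most the integral over the whole graph, i.e. $\int_{\Fp_c^x}|u'|^p\,dm\leq\|u'\|_p^p$. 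This yields the uniform estimate $\|u-\bar{u}\|_\infty\leq(\diam\CX_\Gamma)^{1-\frac1p}\|u'\|_p$, which is exactly the first claimed inequality in the case $q=\infty$, where $|\CX_\Gamma|^{1/q}=1$.

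Next I would pass from $L^\infty$ to $L^q$ for arbitrary $1\leq q\leq\infty$. Since $\CX_\Gamma$ is compact, $m(\CX_\Gamma)=|\CX_\Gamma|<\infty$, and for any bounded $f$ one has $\|f\|_q\leq|\CX_\Gamma|^{1/q}\|f\|_\infty$. Applying this to $f=u-\bar u$ and inserting the uniform estimate gives
\[
\|u-\bar{u}\|_q\leq|\CX_\Gamma|^{\frac1q}\|u-\bar{u}\|_\infty\leq(\diam\CX_\Gamma)^{1-\frac1p}|\CX_\Gamma|^{\frac1q}\|u'\|_p,
\]
which is the general inequality; the special case $q=p$ is then immediate.

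Finally, the Sobolev inequality follows by comparing $u$ to its mean via the triangle inequality, $\|u\|_\infty\leq\|u-\bar{u}\|_\infty+|\bar{u}|$, and controlling the mean by Hölder: $|\bar u|\leq\fint|u|\,dm=|\CX_\Gamma|^{-1}\|u\|_1\leq|\CX_\Gamma|^{-1}|\CX_\Gamma|^{1-\frac1p}\|u\|_p=|\CX_\Gamma|^{-\frac1p}\|u\|_p$. Combining this with the uniform estimate rewritten as $\|u-\bar u\|_\infty\leq(\diam\CX_\Gamma)^{\frac{p-1}{p}}\|u'\|_p$ produces the asserted bound. I expect the only genuinely delicate point to be the treatment of the path integral in the first step: one must ensure that the chosen path neither overcounts the energy (handled by taking a geodesic, which is automatically simple) nor has length exceeding the diameter. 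If one prefers not to assume attainment of geodesics, the same conclusion follows by selecting an $\epsilon$-almost shortest path and letting $\epsilon\to 0$, exactly as in the proof of Proposition \ref{intrinsic}; everything else reduces to routine applications of Hölder's inequality on a finite measure space.
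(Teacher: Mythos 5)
Your proposal is correct and follows essentially the same route as the paper: the pointwise estimate from $(\clubsuit)$ combined with H\"older along a path of length at most $\diam \CX_\Gamma$, then integration over the finite-measure graph for general $q$, and finally the triangle inequality plus H\"older on the mean for the Sobolev bound. Your explicit care with the choice of path (geodesic, hence simple and of length at most the diameter) only makes precise what the paper's proof leaves implicit.
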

\begin{proof}
Both families of inequalities follow directly from the previous lemma: By \eqref{path} we get for all $y\in \CX_\Gamma$
\[|u(y)-\bar{u}|  \leq \int\limits_\mathfrak{p} |u'(y)| \: dm(y) \leq |\mathfrak{p}|^{1-\frac{1}{p}} \|u'\|_p \leq |\diam \CX_\Gamma |^{1-\frac{1}{p}} \|u'\|_p.\]
Hence we have proven the first inequality for $q=\infty$. For $1\leq q < \infty$ we then have
\begin{eqnarray*}
\left(\int |u(y) - \bar{u}|^q\: dm(y) \right)^\frac{1}{q} &\leq & |\CX_\Gamma| ^\frac{1}{q} \sup_{y\in \CX_\Gamma} |u(y)-\bar{u}|\\
&\leq& |\diam \CX_\Gamma|^{1- \frac{1}{p}} |\CX_\Gamma| ^\frac{1}{q} \|u'\|_p.
\end{eqnarray*}
This gives the desired inequality in the case $1\leq q < \infty$.\\
Now set $q=\infty$ and use the (inverse) triangle inequality to obtain
\[\|u\|_\infty \leq |\CX_\Gamma|^{-1} \|u\|_1 + |\diam \CX_\Gamma|^\frac{p-1}{p} \|u'\|_p \]
which by the Hölder inequality yields the claim.
\end{proof}
\begin{Corollary}\label{SobEmb}
Let $\CX_\Gamma$ be a compact metric graph. Then the embedding
\[W^{1,p} (\CX_\Gamma) \hookrightarrow \CC_0(\CX_\Gamma)\]
is continuous for $1\leq p <\infty$ and compact for $1< p< \infty$.
\end{Corollary}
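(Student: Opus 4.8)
The plan is to obtain continuity for free from the Sobolev inequality of Corollary~\ref{1.2}, and to derive the compactness from the Arzel\`a--Ascoli theorem, the decisive step being a uniform H\"older bound extracted from the path integral estimate. Throughout I would use that, since $\CX_\Gamma$ is compact, $\CC_0(\CX_\Gamma)=\CC(\CX_\Gamma)$ carries the sup-norm, so ``continuous embedding'' means boundedness of the inclusion into $(\CC(\CX_\Gamma),\|\cdot\|_\infty)$ and ``compact embedding'' means that bounded sequences in $W^{1,p}$ have uniformly convergent subsequences.

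First I would settle continuity for every $1\le p<\infty$. The Sobolev inequality
\[\|u\|_\infty \le |\CX_\Gamma|^{-\frac1p}\|u\|_p + (\diam \CX_\Gamma)^{\frac{p-1}{p}}\|u'\|_p\]
bounds the right-hand side by $C\|u\|_{W^{1,p}}$ with $C=\max\{|\CX_\Gamma|^{-1/p},(\diam \CX_\Gamma)^{(p-1)/p}\}$, hence $\|u\|_\infty \le C\|u\|_{W^{1,p}}$, which is exactly boundedness of the inclusion map. For compactness in the range $1<p<\infty$ I would take a bounded sequence $(u_n)$ with $\|u_n\|_{W^{1,p}}\le M$; by the continuous embedding it is uniformly bounded in sup-norm, so only equicontinuity remains to be checked. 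Given $x,y\in\CX_\Gamma$ and $\epsilon>0$, choose a path $\Fp$ connecting them with $L(\Fp)\le d(x,y)+\epsilon$; applying the fundamental theorem of calculus on each edge segment of $\Fp$ (exactly as in the Lemma) and then H\"older's inequality along the path gives
\[|u_n(x)-u_n(y)| \le \int_{\Fp}|u_n'|\,dm \le L(\Fp)^{1-\frac1p}\Big(\int_{\Fp}|u_n'|^p\,dm\Big)^{\frac1p} \le (d(x,y)+\epsilon)^{\frac{p-1}{p}}\|u_n'\|_p.\]
Letting $\epsilon\to0$ yields the uniform H\"older estimate $|u_n(x)-u_n(y)|\le M\, d(x,y)^{(p-1)/p}$, and since $p>1$ the exponent $(p-1)/p$ is strictly positive, so $(u_n)$ is equicontinuous. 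The Arzel\`a--Ascoli theorem on the compact metric space $(\CX_\Gamma,d)$ then provides a subsequence converging uniformly, i.e. in $\CC_0(\CX_\Gamma)$, which is the asserted compactness.

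The main point, and the only place where any care is needed, is this uniform H\"older estimate: it is precisely where the hypothesis $p>1$ enters, because for $p=1$ the exponent $(p-1)/p$ vanishes and the path bound degenerates, leaving equicontinuity (and hence compactness) to fail while continuity persists. The remaining verification, that the path integral bound $|u(x)-u(y)|\le \int_{\Fp}|u'|\,dm$ holds for all of $W^{1,p}(\CX_\Gamma)$, is routine once one invokes the continuity of $u$ across the vertices and the one-dimensional fundamental theorem of calculus on each edge, so I do not expect a genuine obstacle beyond assembling these pieces.
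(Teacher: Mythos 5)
Your proof is correct and takes essentially the same route as the paper's: continuity from the Sobolev inequality of Corollary~\ref{1.2}, and compactness by combining the uniform sup-norm bound with the path-integral/H\"older equicontinuity estimate $|u_n(x)-u_n(y)|\le M\,d(x,y)^{(p-1)/p}$ (the paper writes the exponent as $1/q$) and then applying Arzel\`a--Ascoli. Your $\epsilon$-argument passing from the path length to $d(x,y)$ is in fact slightly more careful than the paper's, which substitutes $d(x,y)$ for $L(\mathfrak{p}_x^y)$ directly.
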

\begin{proof}
The continuity of the embedding is clear by the Sobolev inequality in the previous corollary. To show compactness, let $(f_n)$ be a bounded sequence in $W^{1,p}(\CX_\Gamma)$, i.e for all $n\in \IN$ we have $\|f_n\|_{1,p} \leq M$ for some $M>0$. By
\[\|f_n\|_\infty \leq c \|f_n\|_{1,p} \leq cM\]
the sequence is bounded for all $x\in \CX_\Gamma$. Furthermore we have
\[ |f_n(x) - f_n(y)| \leq \int\limits_{\mathfrak{p}_x^y} |f_n'(z)| \: dm(z) \leq L(\mathfrak{p}_x^y)^\frac{1}{q} \biggl( \int\limits_{\mathfrak{p}_x^y} |f'(z)|^p \: dm(z) \biggr)^\frac{1}{p}\]
where $\mathfrak{p}_x^y$ is an arbitrary path connecting $x$ and $y$ and $L(\mathfrak{p}_x^y)$ denotes its length. Hence we get
\[|f_n(x) - f_n(y)| \leq d(x,y)^\frac{1}{q} \|f\|_{1,p}\leq M d(x,y)^\frac{1}{q},\]
i.e. the sequence $(f_n)$ is equicontinuous. The Arzelà-Ascoli theorem now gives the compactness.
\end{proof}
Note that the simplest compact metric graph is a closed interval. The results above are well known for closed intervals. So we have seen that compact metric graphs could be treated like closed intervals. The simplest non-compact metric graph is a half-line. The lemma below shows the analogy of half-lines to non-compact metric graphs.
\begin{Lemma}\label{2.3}
Let $\CX_\Gamma$ be an infinite metric graph and $u\in W^{1,p}(\CX_\Gamma)$ for $1\leq p < \infty$. Then we have
\[\|u\|_\infty \leq (\tfrac{p-1}{p})^\frac{p-1}{p}\|u\|_{1,p}\]
for $1<p<\infty$, and
\[\|u\|_\infty \leq \|u'\|_{1}\]
for $p=1$. Let $Y\subset \CX_\Gamma$ be open and pre-compact, then for all $u \in W^{1,p}_0(Y)$ we have
\[\|u\|_\infty \leq (\tfrac{p-1}{p})^\frac{p-1}{p}\|u\|_{1,p}\]
and
\[\|u\|_\infty \leq \|u'\|_{1}.\]
\end{Lemma}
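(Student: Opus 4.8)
The plan is to reproduce the one--dimensional half--line estimate, the essential point being that on an infinite graph one can always run a path from a near--maximal point of $u$ out to a region where $u$ is forced to be small, and then integrate $u'$ along it. Throughout I use the convention $\|u\|_{1,p}=\|u\|_p+\|u'\|_p$, which is the one under which the stated constant is sharp.

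I would first reduce the whole statement to the case $p=1$. For $u\in W^{1,p}(\CX_\Gamma)$ with $1<p<\infty$ the function $g:=|u|^p$ lies in $W^{1,1}(\CX_\Gamma)$: it is continuous, $g\in L^1$ because $u\in L^p$, and $g'=p\,|u|^{p-1}\sgn(u)\,u'\in L^1$ by H\"older's inequality, since $|u|^{p-1}\in L^{p/(p-1)}$ and $u'\in L^p$. Granting the case $p=1$ applied to $g$, namely $\|g\|_\infty\le\|g'\|_1$, and using $\|g\|_\infty=\|u\|_\infty^p$ together with H\"older, I obtain
\[ \|u\|_\infty^p\ \le\ p\int_{\CX_\Gamma}|u|^{p-1}|u'|\,dm\ \le\ p\,\|u\|_p^{\,p-1}\,\|u'\|_p. \]
The stated constant then drops out of the elementary optimization $a^{p-1}b\le \tfrac1p\bigl(\tfrac{p-1}{p}\bigr)^{p-1}(a+b)^p$ (maximize $t^{p-1}(1-t)$ at $t=\tfrac{p-1}{p}$), applied with $a=\|u\|_p$ and $b=\|u'\|_p$, giving $\|u\|_\infty\le(\tfrac{p-1}{p})^{\frac{p-1}{p}}(\|u\|_p+\|u'\|_p)=(\tfrac{p-1}{p})^{\frac{p-1}{p}}\|u\|_{1,p}$.

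It thus remains to prove the base case $\|u\|_\infty\le\|u'\|_1$ for $u\in W^{1,1}(\CX_\Gamma)$. Fix $\epsilon>0$ and choose $x_0$ with $|u(x_0)|\ge\|u\|_\infty-\epsilon$. Since $\CX_\Gamma$ is infinite, connected and locally finite, it carries an unbounded simple path (ray) $\Fp=(\Fp_0,\Fp_1,\dots)$ of infinite length with $\Fp_0=x_0$. Because $\int_{\Fp}|u|\,dm\le\|u\|_1<\infty$ while $\Fp$ has infinite length, necessarily $\liminf|u|=0$ along $\Fp$; hence there are points $x_n\to\infty$ on $\Fp$ with $u(x_n)\to0$. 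The fundamental theorem of calculus along the sub--path from $x_0$ to $x_n$, exactly as in the lemma leading to $(\clubsuit)$, yields $|u(x_0)-u(x_n)|\le\int_{\Fp}|u'|\,dm\le\|u'\|_1$, and letting $n\to\infty$ and then $\epsilon\to0$ gives $\|u\|_\infty\le\|u'\|_1$.

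For $Y\subset\CX_\Gamma$ open and pre--compact and $u\in W^{1,p}_0(Y)$ the same scheme applies, and here the geometric step is immediate: $\supp u$ is contained in the compact, hence finite, subgraph $\overline Y$, so since $\CX_\Gamma$ is infinite there is a point $x_1\notin\supp u$, i.e. $u(x_1)=0$. Joining a near--maximal $x_0$ to $x_1$ by a simple path and integrating gives $\|u\|_\infty\le\|u'\|_1$ directly, and the $\|u\|_{1,p}$ bound follows from the reduction above. I expect the only genuine obstacle to be the geometric input in the base case for general $u\in W^{1,p}(\CX_\Gamma)$, namely producing an unbounded path along which $\liminf|u|=0$: this is where infiniteness (non--compactness, and the rays having infinite length) is essential, and it is exactly the point that becomes trivial once $u$ has compact support.
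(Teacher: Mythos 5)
Your proposal is correct and follows the same two-step strategy as the paper's proof: reduce $1<p<\infty$ to $p=1$ by applying the sup-bound to $|u|^p$ and using H\"older (the paper then invokes Young's inequality $ab\le\tfrac{1}{p}a^p+\tfrac{1}{q}b^q$, which is exactly your optimization of $t^{p-1}(1-t)$), and prove the $p=1$ case by integrating $u'$ along a path from a near-extremal point to points where $u$ is small. The only substantive difference is how those small points are produced. The paper picks pairs $y_n,z_n$ with $d(y_n,z_n)$ constant and $d(y_n,x)\to\infty$, observes that $\int_{\Fp_{y_n}^{z_n}}|u'|\,dm\to 0$ (a tail estimate on $u'$), concludes that $u$ converges to a constant $c$ at infinity, and uses $u\in L^p$ to force $c=0$; you instead follow a single infinite-length ray and use $\int_{\Fp}|u|\,dm<\infty$ to force $\liminf |u|=0$ along it. Both mechanisms work, and your direct treatment of the compactly supported case (pick a zero of $u$ off $\supp u$ and integrate) is, if anything, more robust than the paper's one-line ``extend by zero'', since it requires no geometry at infinity.

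One caveat, which you half-acknowledge and which applies equally to the paper: the claim that an infinite, connected, locally finite metric graph carries a ray of \emph{infinite metric length} is false under the paper's definitions. K\"onig's lemma produces an infinite combinatorial ray, but its length can be finite --- take a combinatorial half-line with edge lengths $2^{-n}$. On that graph the lemma itself fails: $u\equiv 1$ lies in $W^{1,p}(\CX_\Gamma)$ because the total length is finite, yet $\|u'\|_1=0$. So some unstated hypothesis --- infinite total length, infinite diameter, or bounded geometry --- is being used; the paper smuggles it in at the corresponding point, when it assumes points with $d(y_n,x)\to\infty$ exist. Under bounded geometry (in force from Section 3 of the paper onward) every infinite combinatorial ray does have infinite length, and your argument is then complete; your proof is thus exactly as correct as the paper's, no more and no less.
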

\begin{proof}
Fix $x\in \CX_\Gamma$. Choose $y_n,z_n\in \CX_\Gamma$ with $d(y_n,z_n)=\mathrm{const.}$ and $d(y_n,x)\to \infty$. Then we have
\[|u(y_n)-u(z_n)| \leq \int\limits_{\mathfrak{p}_{y_n}^{z_n}} |u'(z)| \: dm(z).\]
Since $u\in W^{1,p}(\CX_\Gamma)$ the right hand side tends to zero, and hence $u(y)\to c$ for some constant $c$ for $d(x,y)\to \infty$. But since $u\in L^p(\CX_\Gamma)$ it follows that $c=0$. This implies
\[|u(x)|\leq \|u'\|_1.\]
For $1<p<\infty$ we apply this inequality to $u^p$ and by Hölder inequality we get
\[\| |u|^p \|_\infty^\frac{1}{p} \leq p^\frac{1}{p} \|u^{p-1} u'\|_1^\frac{1}{p} \leq p^\frac{1}{p} \|u\|_p^\frac{p-1}{p} \|u'\|_p^\frac{1}{p} \leq (\tfrac{p-1}{p})^\frac{p-1}{p} (\|u\|_p + \|u'\|_p),\]
where the last estimate follows from the well known inequality $ab \leq \frac{1}{p} a^p + \frac{1}{q} b^q$, for all $a,b >0$. The second statement of the lemma follows form the first by extending $u$ by zero on $\CX_\Gamma$.
\end{proof}
\begin{remark}{\rm As already mentioned the inequalities above reflect the one dimensional structure of the graph. Once \eqref{path} is established, our arguments proceed essentially as in the case of an interval or a semi-axis, see for instance \cite{Burenkov-98}.}
\end{remark}
\subsection{Nash inequality and the heat kernel}
The Sobolev estimates above are the starting point for us to deduce bounds on the heat kernel. The first step will be the following Nash type inequality.
\begin{Lemma}
Let $\CX_\Gamma$ be an infinite metric graph and $u\in W^{1,2}(\CX_\Gamma)\cap L^1(\CX_\Gamma)$. Then there exists $c>0$ such that
\[ \|u\|_2 \leq c \|u'\|_2^\frac{1}{3} \|u\|_1^\frac{2}{3}.\]
Let $Y\subset \CX_\Gamma$ pre-compact, then for all $u \in W^{1,2}(Y)\cap L^1(Y)$ we have
\[ \|u\|_2 \leq c (|\CX_\Gamma|^{-1} \|u\|_2 + \|u'\|_2)^\frac{1}{3} \|u\|_1^\frac{2}{3}.\]
\end{Lemma}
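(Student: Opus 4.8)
The plan is to derive the Nash inequality directly from the Sobolev-type bounds already established, using the classical interpolation trick that produces the exponents $\tfrac13$ and $\tfrac23$. The key observation is that the Nash inequality $\|u\|_2 \leq c\|u'\|_2^{1/3}\|u\|_1^{2/3}$ is the one-dimensional version (the general form on $\IR^n$ has exponents depending on $n$, and here the graph behaves locally like $\IR^1$), so the right power to aim for is $\tfrac13$.

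First I would start from the $L^\infty$ bound in Lemma \ref{2.3}: for an infinite metric graph and $u\in W^{1,2}(\CX_\Gamma)$ we have $\|u\|_\infty \leq (\tfrac12)^{1/2}\|u\|_{1,2}$, but more usefully the proof there gives the sharper building block $|u(x)| \leq \|u'\|_1$ before the Hölder step. I would instead work from the homogeneous estimate one gets by applying the fundamental-theorem argument of Lemma \ref{2.3} to $u^2$: this yields $\|u\|_\infty^2 \leq 2\|u u'\|_1 \leq 2\|u\|_2\|u'\|_2$ by Cauchy--Schwarz, hence $\|u\|_\infty \leq \sqrt{2}\,\|u\|_2^{1/2}\|u'\|_2^{1/2}$. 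This is the genuinely scale-invariant $L^\infty$ estimate without the additive $\|u\|_p$ term, and it is the form that feeds cleanly into the Nash interpolation.

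Next I would interpolate the $L^2$ norm between $L^1$ and $L^\infty$: writing
\[
\|u\|_2^2 = \int |u|^2\,dm \leq \|u\|_\infty \int |u|\,dm = \|u\|_\infty \,\|u\|_1,
\]
and then substituting the homogeneous $L^\infty$ bound gives
\[
\|u\|_2^2 \leq \sqrt{2}\,\|u\|_2^{1/2}\|u'\|_2^{1/2}\,\|u\|_1.
\]
Dividing by $\|u\|_2^{1/2}$ and raising to the power $\tfrac23$ then produces exactly $\|u\|_2 \leq c\,\|u'\|_2^{1/3}\|u\|_1^{2/3}$ with $c = 2^{1/3}$, which is the first assertion.

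For the pre-compact statement, the only modification is that on a pre-compact $Y$ one cannot send $d(x,y)\to\infty$, so the constant $c=0$ argument in Lemma \ref{2.3} fails and the homogeneous $L^\infty$ bound must be replaced by the inhomogeneous Sobolev inequality of Corollary \ref{1.2}, giving $\|u\|_\infty \leq c(|\CX_\Gamma|^{-1}\|u\|_2 + \|u'\|_2)^{1/2}\|u\|_2^{1/2}$ or the analogous estimate with the full first-order term; carrying this through the same interpolation $\|u\|_2^2 \leq \|u\|_\infty\|u\|_1$ yields the stated second inequality with the correction term $|\CX_\Gamma|^{-1}\|u\|_2$ inside the parenthesis. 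The main obstacle I anticipate is bookkeeping the constants and verifying that the homogeneous $L^\infty$ estimate really holds without an additive lower-order term in the infinite case — this hinges on the decay argument of Lemma \ref{2.3} (that $u$ tends to $0$ at infinity), which is exactly why the clean Nash inequality holds on infinite graphs but requires the correction term on pre-compact sets.
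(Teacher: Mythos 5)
Your proposal is correct and takes essentially the same route as the paper's own proof: both apply the $L^\infty$ bound of Lemma~\ref{2.3} (with $p=1$) to $u^2$, estimate $\|(u^2)'\|_1 = 2\|uu'\|_1 \leq 2\|u\|_2\|u'\|_2$ by Cauchy--Schwarz, and then interpolate $\|u\|_2^2 \leq \|u\|_\infty\|u\|_1$ to extract the exponents $\tfrac13$ and $\tfrac23$. Your treatment of the pre-compact case also matches the paper, which replaces the homogeneous bound by the inhomogeneous Sobolev inequality of Corollary~\ref{1.2} with $p=1$ and runs the identical computation.
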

\begin{proof}
Let $x\in \CX_\Gamma$. Then Lemma \ref{2.3} applied to $u^2$ gives
\[|u(x)|^2 \leq 2 \|u u'\|_1 \leq 2 \|u\|_2 \|u'\|_2\]
and hence
\[|u(x)|^2 \leq \sqrt{2} \|u\|_2^\frac{1}{2} \|u'\|_2^\frac{1}{2} |u(x)|\]
which yields by integrating over $\CX_\Gamma$
\[ \|u\|_2^2 \leq \sqrt{2} \|u\|_2^\frac{1}{2} \|u'\|_2^\frac{1}{2} \|u\|_1\]
so the first statement is proven. Analogous calculations starting with the Sobolev inequality
\[\|u\|_\infty \leq |\CX_\Gamma|^{-1} \|u\|_1 + \|u'\|_1.\]
from Corollary \ref{1.2} with $p=1$ yield the second inequality.
\end{proof}
Using this inequality we can show ultracontractivity estimates of the associated heat semigroup. This uses ideas going back to Nash \cite{Nash-58}. We refer to \cite{Davies-89} and references therein for further details and only sketch the proof.
\begin{Theorem}
There exists $c>0$ such that we have for all $u\in L^2 (\CX_\Gamma)$ and $t>0$
\[\|e^{-t \Delta } u\|_\infty \leq c t^{-\frac{1}{4}} \|u\|_2.\]
Moreover the heat semigroup $e^{-\Delta t}$ has an integral kernel $p(t,x,y)$ for all $t>0$ which satisfies
\[0\leq p(t,x,y) \leq c t^{-\frac{1}{2}}.\]
Furthermore this kernel is $\CC^\infty ((0,\infty)\times \CX_e \times \CX_{e'})$ for all $e,e' \in E$, all even derivatives w.r.t. space are continuous and all odd derivatives w.r.t. space satisfy the Kirchhoff boundary conditions.
\end{Theorem}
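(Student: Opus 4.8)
The plan is to run Nash's original differential-inequality argument, now fuelled by the Nash inequality of the preceding lemma, and then to upgrade the resulting $L^1\to L^2$ smoothing estimate to the stated bounds by duality and the semigroup property. Squaring the Nash inequality yields $\|v\|_2^2 \le c^2\|v'\|_2^{2/3}\|v\|_1^{4/3}$, hence $\|v'\|_2^2 \ge c^{-6}\|v\|_2^6\|v\|_1^{-4}$ for every admissible $v$. Fix $u\in L^1\cap L^2$, which we may take nonnegative, write $u_t := e^{-t\Delta}u$ and set $\phi(t):=\|u_t\|_2^2$. Since $\CE$ is the form of $\Delta$ we have $\phi'(t) = -2\CE(u_t,u_t) = -2\|u_t'\|_2^2$, and since $e^{-t\Delta}$ is Markovian it contracts $L^1$, so $\|u_t\|_1\le\|u\|_1$. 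Inserting the squared Nash inequality gives the differential inequality $\phi'(t)\le -2c^{-6}\|u\|_1^{-4}\,\phi(t)^3$. Integrating $(\phi^{-2})' \ge 4c^{-6}\|u\|_1^{-4}$ and dropping the initial value produces $\|u_t\|_2 \le C\,t^{-1/4}\|u\|_1$, i.e. $\|e^{-t\Delta}\|_{1\to2}\le C t^{-1/4}$.

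Because $\Delta$ is self-adjoint, $e^{-t\Delta}$ is self-adjoint, so dualising gives $\|e^{-t\Delta}\|_{2\to\infty}=\|e^{-t\Delta}\|_{1\to2}\le Ct^{-1/4}$, which is precisely the first asserted estimate $\|e^{-t\Delta}u\|_\infty\le ct^{-1/4}\|u\|_2$. Splitting $e^{-t\Delta}=e^{-(t/2)\Delta}e^{-(t/2)\Delta}$ and composing the $L^1\to L^2$ with the $L^2\to L^\infty$ bound yields $\|e^{-t\Delta}\|_{1\to\infty}\le c\,t^{-1/2}$. By the Dunford--Pettis theorem an operator bounded from $L^1$ to $L^\infty$ is exactly an integral operator with kernel in $L^\infty(\CX_\Gamma\times\CX_\Gamma)$ whose supremum norm equals the operator norm; this produces a kernel $p(t,\cdot,\cdot)$ with $p(t,x,y)\le ct^{-1/2}$. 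Nonnegativity $p(t,x,y)\ge0$ follows since the Dirichlet form $\CE$ makes $e^{-t\Delta}$ positivity preserving.

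It remains to establish the regularity statement, and this is the part I expect to require the most care. Here I would exploit that, being self-adjoint and nonnegative, $\Delta$ generates an analytic semigroup which maps $L^2(\CX_\Gamma)$ into $\bigcap_{k\in\IN}D(\Delta^k)$. The domain of $\Delta$ consists of functions that are $W^{2,2}$ on each edge, continuous across the vertices, and satisfy $\sum_{e\sim v}u_e'(v)=0$. Iterating this description, a function $u\in D(\Delta^k)$ lies in $W^{2k,2}(0,l(e))$ on every edge, its even-order derivatives $u^{(2j)}$ (for $0\le j\le k-1$) are continuous across every vertex, and its odd-order derivatives $u^{(2j+1)}$ (for $0\le j\le k-1$) satisfy the Kirchhoff condition $\sum_{e\sim v}u_e^{(2j+1)}(v)=0$; indeed, applying the vertex conditions for $\Delta$ to $\Delta^j u = (-1)^j u^{(2j)}$ reproduces exactly the continuity of the even and the Kirchhoff condition for the odd derivatives. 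The one-dimensional Sobolev embedding $W^{2k,2}(0,l(e))\hookrightarrow C^{2k-1}([0,l(e)])$ then converts this into classical smoothness on each edge interior, and letting $k\to\infty$ gives $u\in C^\infty(\CX_e)$ for every edge together with the asserted even/odd behaviour at the vertices.

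Finally I would transfer this regularity to the kernel itself. Since $p(t,\cdot,\cdot)$ is bounded and $\int_{\CX_\Gamma}p(t,x,y)\,dm(y)\le1$ by sub-Markovianity, the slice $p(t/2,\cdot,y)$ lies in $L^2(\CX_\Gamma)$, and the semigroup law gives $p(t,\cdot,y)=e^{-(t/2)\Delta}\bigl[p(t/2,\cdot,y)\bigr]$. Hence $p(t,\cdot,y)\in\bigcap_k D(\Delta^k)$ and inherits the spatial smoothness and vertex behaviour just established; by symmetry of the kernel the same holds in the $y$-variable. Smoothness in the time variable follows from analyticity, since $\partial_t^n e^{-t\Delta}=(-\Delta)^n e^{-t\Delta}$ is bounded for each $t>0$, and combining the space- and time-smoothing by a standard bootstrap yields joint $C^\infty$-regularity on $(0,\infty)\times\CX_e\times\CX_{e'}$. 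The genuine \emph{obstacle} is not any single estimate but the bookkeeping of the vertex conditions through the powers $\Delta^k$ and the careful passage from the abstract membership $p(t,\cdot,y)\in D(\Delta^k)$ to pointwise statements about the derivatives of the kernel.
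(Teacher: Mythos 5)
Your proposal is correct in substance and follows the same overall strategy as the paper: Nash inequality $\Rightarrow$ ultracontractivity $\Rightarrow$ existence of the kernel with the $t^{-1/2}$ bound, and then spectral theorem plus one-dimensional Sobolev embedding plus iteration of the vertex conditions through $\CD(\Delta^k)$ for the regularity statement. The main difference in the first half is one of explicitness: you reproduce Nash's differential-inequality argument ($\phi'\leq -2c^{-6}\|u\|_1^{-4}\phi^3$, integrate, dualise, compose the half-semigroups) and the Dunford--Pettis step, whereas the paper simply cites Theorems 2.4.6 and 2.1.2 of Davies' book, which encapsulate exactly these arguments. Your version is more self-contained and equally valid; note only that the Nash lemma you invoke is the one for infinite graphs, consistent with the paper's implicit standing assumption in this section.

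The genuine divergence is in how the edgewise regularity is transferred to the kernel, and this is also where your argument has a soft spot that the paper's device is designed to avoid. You apply the semigroup to kernel slices, $p(t,\cdot,y)=e^{-(t/2)\Delta}\bigl[p(t/2,\cdot,y)\bigr]$, which correctly yields, for a.e.\ fixed $y$, a version of $p(t,\cdot,y)$ in $\bigcap_k \CD(\Delta^k)$, hence smooth in $x$ with the stated vertex behaviour; symmetry then gives the analogous statement in $y$ for a.e.\ fixed $x$. But smoothness in each variable separately, each only for a.e.\ values of the other, does not by itself yield joint $\CC^\infty$ regularity on $(0,\infty)\times\CX_e\times\CX_{e'}$, and the ``standard bootstrap'' you invoke is precisely the unproved step. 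The paper, following Davies' Theorem 5.2.1, instead introduces the Riesz representation vectors $a(t,x)\in L^2$ with $e^{-t\Delta}f(x)=\langle f,a(t,x)\rangle$ (these are exactly your kernel slices, $a(t,x)=p(t,x,\cdot)$), proves that $(t,x)\mapsto a(t,x)$ is weakly, hence \emph{norm}, infinitely differentiable on each edge interior, and then defines $p(t,x,y)=\langle a(\tfrac{t}{2},x),a(\tfrac{t}{2},y)\rangle$. Joint smoothness in $(t,x,y)$ then follows immediately from the norm-smoothness of $a$ in each group of variables and the continuity of the inner product, and this formula also defines $p$ pointwise everywhere, resolving the ``a.e.\ version'' issue you flag at the end. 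So your proof is completable, but to close it cleanly you should upgrade your slicewise statement to smoothness of the map $x\mapsto p(t,x,\cdot)$ as an $L^2$-valued function and conclude via the bilinear formula, rather than via a bootstrap on the separately-smooth versions.
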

\begin{proof} The first statement follows from Theorem 2.4.6 \cite{Davies-89} and the estimate on the heat kernel from Theorem 2.1.2. \cite{Davies-89}. We only prove the last claim. We prove this along the lines of \cite{Davies-89} Theorem 5.2.1.\\
By the spectral theorem for all $f\in L^2(\CX_\Gamma)$ we have that $e^{-t\Delta} f\in \CD(\Delta^n)$ for all $n\geq 1$ and $t>0$. By Sobolev's inequality we have that for all $e\in E$ and $t>0$ the mapping $e^{-t\Delta}f \big|_e$ is $\CC^\infty$. Furthermore all even derivatives w.r.t. space are continuous and all odd derivatives w.r.t. space satisfy the Kirchhoff b.c. in each vertex. Since the mapping $t\mapsto e^{-t\Delta } f \in \CD(\Delta^n)$ is analytic, we can conclude that $(t,x)\mapsto e^{-t\Delta} f \big|_e$ is $\CC^\infty$ for all $e\in E$. For fixed $t>0$ and $x\in \CX_\Gamma$ the map $f\mapsto e^{-t\Delta } f(x)$ is bounded on $L^2$, hence there exists $a(t,x)\in L^2$ such that
\[e^{-t\Delta } f(x) = \langle f, a(t,x)\rangle.\]
This implies that $(t,x)\to a(t,x)$ is weakly infinitely differentiable on each edge interior and hence (\cite{Davies-89}) also in norm. This implies norm continuity w.r.t. space for all even derivatives and the Kirchhoff b.c. for all odd ones. For $g\in \CC_c^\infty$ we have by construction
\[\langle e^{-t\Delta} f, g\rangle = \int \langle f, a(t,x)\rangle g(x) \:dm (x)\]
and hence the $L^2$ identity
\[e^{-t \Delta } g = \int  a(t,x) g(x) \:dm(x).\]
Let now $g,h \in \CC_c^\infty$ then
\begin{eqnarray*}
\langle e^{-t\Delta} g,h \rangle &=& \langle e^{-\frac{t}{2}\Delta} g,e^{-\frac{t}{2}\Delta} h \rangle\\
&=& \int p(t,x,y) g(x)\bar{h}(y) \: dm(x) dm(y)
\end{eqnarray*}
with the identity
\[p(t,x,y) = \langle a(\tfrac{t}{2},x), a(\tfrac{t}{2}, y)\rangle\]
\end{proof}
Now applying Theorem 2.2.3 \cite{Davies-89} to the ultracontractivity estimate above, we arrive at a logarithmic Sobolev inequality, that is for all $0\leq u \in W^{1,2}_0(\CX_\Gamma)\cap L^1(\CX_\Gamma)$ we have
\[\int u^2 \log f \: dm(x) \leq \epsilon \CE(u) + M(\epsilon) \|f\|^2 + \|f\|_2^2\log \|f\|_2\]
for all $\epsilon >0$ and with $M(\epsilon)= c - \frac{\log \epsilon}{4}$. This implies Gaussian upper bounds for the heat kernel as shown in § 3.2 of \cite{Davies-89}, i.e. there exist constants  $a,b>0$ such that we have
\[0\leq p(t,x,y) \leq a t^{-\frac{1}{4}} e^{-b \frac{\rho(x,y)^2}{4t}}\]
where $\rho(x,y)$ is the intrinsic distance defined in section 1 and which coincides with the distance function $d(x,y)$ by Proposition \ref{intrinsic}.
\begin{remark}{\rm In the case of compact graphs, the method of \cite{Davies-89} was used in \cite{Mugnolo-06} to obtain similar results. However the method there to establish a Nash inequality does not seem to work in the non-compact case.\\
In \cite{KostrykinPS-07} explicit power series expansion for the heat kernel were obtained for graphs with finitely many vertices, i.e. for graphs where edges of infinite lengths are allowed.}
\end{remark}
\section{Volume doubling and Poincaré inequality}
In spite of one-dimensional behavior discussed in the previous section, the combinatorial structure causes some different local dimension effects in the sense of the following lemma. As we will see later, these result provide the necessary framework to use the results from \cite{Sturm-95} and \cite{Sturm-96}.
\begin{Lemma}[Local volume doubling]
Let $x\in \CX_\Gamma$ and fix $R>0$ such that $|B_{2R}(x)\cap \{v\in V: d_v >2\}| \leq 1$. Then for all $0<r<R$ we have
\[0< m(B_{2r}(x)) \leq (\tfrac{d_v}{2}+1) \cdot m(B_r(x)) < \infty.\]
In particular the volume doubling holds locally, that is, for all relative compact $Y\subset \CX_\Gamma$ there exists $\nu =\nu(Y)$ such that for all $x\in \CX_\Gamma$, $r>0$ with $B_{2r}(x) \subset Y$ we have
\[0< m(B_{2r}(x)) \leq 2^\nu \cdot m(B_r(x)) < \infty.\]
\end{Lemma}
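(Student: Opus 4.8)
The plan is to reduce the whole statement to the one–dimensional growth of the volume function $f(r):=m(B_r(x))$. Since $\rho=d$ by Proposition \ref{intrinsic} and $d(x,\cdot)$ is $1$--Lipschitz and piecewise linear along each edge, the distance sphere $\{y\in\CX_\Gamma:d(x,y)=\rho\}$ is finite for a.e.\ $\rho$, and a coarea (layer--cake) argument gives
\[ f(r)=\int_0^r N(\rho)\,d\rho,\qquad N(\rho):=\#\{\,y\in\CX_\Gamma : d(x,y)=\rho\,\}. \]
Finiteness of $N(\rho)$, hence $f(r)<\infty$, follows from the hypothesis: inside $B_{2R}(x)\supset B_{2r}(x)$ there is at most one vertex of degree $>2$, so the ball is swept out by boundedly many rays of length $\le 2r$; positivity $f(r)>0$ is clear because $B_r(x)$ contains a nondegenerate segment around $x$.

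Next I would compute $N(\rho)$ from the single--branch structure. Write $v$ for the unique vertex of degree $>2$ in $B_{2R}(x)$, setting $d_v:=2$ if there is none (the honest one--dimensional case). If $x=v$ the sphere consists of $d_v$ points as long as the rays survive, so $N\equiv d_v$ and $f(r)=d_v\,r$, giving quotient $2$. If $x\neq v$, put $s:=d(x,v)>0$: for $\rho<s$ the point $x$ lies in the interior of a path and $N(\rho)=2$, while the front reaching $v$ fans out into the remaining $d_v-1$ edges, so for $\rho>s$ one has $N(\rho)=(d_v-1)+1=d_v$. Integrating,
\[ f(r)=\begin{cases}2r,& 0\le r\le s,\\[1mm] 2s+d_v\,(r-s),& s\le r.\end{cases} \]
A short computation of $f(2r)/f(r)$ over the regimes $2r\le s$, $r\le s\le 2r$ and $s\le r$ shows the quotient equals $2$ in the first and attains its maximum over the other two at $r=s$, where
\[ \frac{f(2r)}{f(r)}\Big|_{r=s}=\frac{2s+d_v s}{2s}=\frac{d_v}{2}+1; \]
since $d_v\ge 2$ this dominates the value $2$ from the cases $x=v$ and $d_v=2$, and the first inequality follows.

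The step I expect to be the main obstacle is controlling the remaining degenerate features inside $B_{2R}(x)$: vertices of degree $1$, where a front is absorbed and $N$ drops by one, and cycles closing through $v$ via degree--$2$ vertices, where two fronts collide and annihilate. Each of these only decreases $N$, but it may do so on $[0,r]$ as well as on $[r,2r]$, so one has to verify that the extremal quotient is still governed by the clean transition from $N=2$ to $N=d_v$; the critical configuration to examine is precisely $r\approx d(x,v)$, and the careful case--checking there is the delicate part of the argument.

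Finally, for the \emph{in particular} statement I would argue by compactness rather than through the pointwise constant. A relatively compact $Y$ meets only finitely many edges, hence contains finitely many vertices of degree $>2$, of maximal degree $D$ and pairwise distance $\ge\delta>0$, while the edges meeting $\bar Y$ have length $\ge\ell_{\min}>0$. Choosing $R_0<\delta/4$, every ball $B_{2r}(x)$ with $r<R_0$ contains at most one branch vertex, so the first part applies with constant $\le\tfrac{D}{2}+1$. For $R_0\le r$ with $B_{2r}(x)\subset Y$ one has the uniform bounds $m(B_{2r}(x))\le m(Y)<\infty$ and $m(B_r(x))\ge m(B_{R_0}(x))\ge c_0:=\min(R_0,\ell_{\min})>0$, so the quotient is at most $m(Y)/c_0$. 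Taking $\nu$ with $2^\nu\ge\max\bigl(\tfrac{D}{2}+1,\;m(Y)/c_0\bigr)$ then yields the asserted uniform doubling constant on $Y$.
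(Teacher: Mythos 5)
Your clean-case analysis is correct and is, in substance, the paper's own proof: the paper computes $m(B_{2r}(x))=2d(x,v)+d_v(2r-d(x,v))$ and bounds the quotient, which is exactly your piecewise-linear $f$ with maximum $\tfrac{d_v}{2}+1$ at $r=d(x,v)$; your version is in fact more careful, since the paper's chain of inequalities tacitly assumes $r\le d(x,v)\le 2r$ and $m(B_r(x))=2r$, whereas you also treat the regime $d(x,v)\le r$. Your compactness argument for the \emph{in particular} clause likewise matches the paper's one-sentence justification. The genuine gap is precisely the step you defer, and it is not a technicality: it cannot be closed in the form you propose, because in the presence of degree-one vertices the constant $\tfrac{d_v}{2}+1$ is false. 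Take $x$ in the interior of an edge whose near endpoint $w$ is a leaf, with $d(x,w)=\epsilon$, and let $v$ have degree $d_v=3$ at distance $s=r$ from $x$, the two edges leaving $v$ away from $x$ being long. The hypothesis $|B_{2R}(x)\cap\{v\in V:d_v>2\}|\le 1$ holds, yet
\[ m(B_r(x))=r+\epsilon,\qquad m(B_{2r}(x))=3r+\epsilon, \]
so for $\epsilon<r/3$ the quotient exceeds $\tfrac{5}{2}=\tfrac{d_v}{2}+1$, and it tends to $d_v=3$ as $\epsilon\to 0$. The lemma's hypothesis restricts only vertices of degree $>2$, so this configuration is admissible; the paper's own proof has the same blind spot, since both its first identity and its last step (which needs $m(B_r(x))\ge 2r$) presuppose that neither front issuing from $x$ is absorbed by a leaf before radius $r$.

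Your other worry, cycles, is by contrast harmless: with a single vertex of degree $>2$ every cycle passes through $v$, a collision is just the simultaneous early death of two branch fronts, and any early death of a branch front only lowers the quotient (it removes mass from $[r,2r]$ at least as fast as from $[0,r]$). Leaves are the unique destructive feature, exactly because they can kill a front on $[0,r]$ while the branching at $v$ acts on $[r,2r]$. To complete the proof you must therefore either add the hypothesis that $B_{2r}(x)$ contains no vertex of degree one, or settle for the larger constant $d_v$, which your front count does deliver in full generality: $N\le 2$ before the splitting radius $s$, the split turns one front into $d_v-1$ fronts so $N\le d_v$ afterwards (and $N\le d_v-1$ if the other front has already died), and $N\ge 1$ up to radius $2r$ unless $B_{2r}(x)=\CX_\Gamma$; maximizing $f(2r)/f(r)$ under these constraints gives exactly $\max(2,d_v)$. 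This weaker constant costs nothing downstream: the \emph{in particular} statement and all later applications only require some finite doubling constant, and your compactness argument then goes through verbatim with $\tfrac{D}{2}+1$ replaced by $D$.
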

\begin{proof}
Without restriction $|B_{2R}(x)\cap \{v\in V: d_v >2\}| =1$, since the other case is trivial. Denote the vertex contained in this intersection by $v$. Then we have
\begin{eqnarray*}
m(B_{2r}(x)) &=& m(B_{d(x,v)}(x)) + m(B_{2r-d(v,x)}(v))\\
&=& 2d(v,x) + d_v (2r-d(v,x))\\
&=& (2-d_v) d(v,x) + 2rd_v\\
&\leq& (2 - d_v) r + 2d_vr\\
&\leq& (1+\tfrac{d_v}{2})\cdot m(B_r(x)).
\end{eqnarray*}
The second claim follows easily from the first, since by relative compactness there exists a lower bound for the edge lengths and an upper bound for the vertex degrees.
\end{proof}
\begin{Theorem}[Local Poincaré inequality]\label{PI}
Let $x\in \CX_\Gamma$ and fix $R>0$ such that $|B_{R}(x)\cap \{v\in V: d_v >2\}| \leq 1$. Then for all $0<r<R$ and $u\in W^{1,2} (B_r(x))$ we have
\[ \int\limits_{B_r(x)} |u(y) - \bar{u}_{B_r(x)}|^2 \: dm(y) \leq c_P r^2 \int\limits_{B_r(x)} |u'(y)|^2\:dm(y)\]
with $c_P = d_v$ if $v\in B_R (x) \cap V$ or $c_P =2$ else.\\
In particular the Poincaré inequality holds locally, that is, for all relative compact $Y\subset \CX_\Gamma$ there exists $c_P =c_P (Y)$ such that for all $x\in \CX_\Gamma$, $r>0$ with $B_{r}(x) \subset Y$ we have
\[ \int\limits_{B_r(x)} |u(y) - \bar{u}_{B_r(x)}|^2 \: dm(y) \leq c_P r^2 \int\limits_{B_r(x)} |u'(y)|^2\:dm(y)\]
for all $u\in W^{1,2}(B_r(x))$.
\end{Theorem}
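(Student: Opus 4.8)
\emph{Overall approach.} The plan is to reduce the Poincaré inequality on $B_r(x)$ to a one-dimensional estimate along geodesics emanating from the centre $x$, in the same spirit as the path estimate $(\clubsuit)$. The hypothesis that $B_R(x)$ meets $\{v\in V:d_v>2\}$ in at most one point, together with $B_r(x)\subseteq B_R(x)$ for $r<R$, means that $B_r(x)$ carries the structure of a star graph: either an interval of length $2r$ (no branch vertex present) or a finite tree with a single branch vertex $v$. I will use exactly two geometric facts. First, $d(x,y)\le r$ for every $y\in B_r(x)$, by definition of the ball. Second, and this is the only structural point needing care, for each $y\in B_r(x)$ a geodesic $\Fp_x^y$ from $x$ to $y$ stays inside $B_r(x)$; indeed any point $w$ on a geodesic from the centre satisfies $d(x,w)\le d(x,y)\le r$. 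Finally I record from the volume computation of the preceding lemma that $m(B_r(x)) = 2d(x,v)+d_v\bigl(r-d(x,v)\bigr)\le d_v r$ when $v\in B_r(x)$ (using $d_v\ge 3$), while $m(B_r(x))=2r$ when $B_r(x)$ is an interval.

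\emph{Replacing the mean by a point value.} Since the mean $\bar u_{B_r(x)}$ minimises $c\mapsto\int_{B_r(x)}|u-c|^2\,dm$ over constants $c$, I may estimate
\[\int_{B_r(x)}|u-\bar u_{B_r(x)}|^2\,dm\;\le\;\int_{B_r(x)}|u(y)-u(x)|^2\,dm(y),\]
trading the awkward mean for the value of $u$ at the centre, which pairs naturally with the radial geodesics above. Here $u$ is continuous on the compact star $B_r(x)$, so $u(x)$ is meaningful.

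\emph{The one-dimensional estimate and conclusion.} For fixed $y$, the fundamental theorem of calculus along $\Fp_x^y$ followed by Cauchy--Schwarz gives
\[|u(y)-u(x)|^2\;\le\;\Bigl(\int_{\Fp_x^y}|u'|\,dm\Bigr)^2\;\le\;d(x,y)\int_{\Fp_x^y}|u'|^2\,dm\;\le\;d(x,y)\int_{B_r(x)}|u'|^2\,dm,\]
where the last inequality uses $\Fp_x^y\subseteq B_r(x)$. Integrating in $y$ and bounding $d(x,y)\le r$ yields
\[\int_{B_r(x)}|u(y)-u(x)|^2\,dm(y)\;\le\;r\,m(B_r(x))\int_{B_r(x)}|u'|^2\,dm.\]
Inserting $m(B_r(x))\le d_v r$ (resp.\ $m(B_r(x))=2r$) gives the claim with $c_P=d_v$ (resp.\ $c_P=2$). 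The ``in particular'' statement then follows at once: on a relatively compact $Y$ the vertex degrees are bounded, so $c_P(Y)$ may be taken to be the largest degree occurring in $Y$ (or $2$), uniformly in all $x,r$ with $B_r(x)\subseteq Y$.

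\emph{Main obstacle.} The analytic core is routine, being the same FTC/Cauchy--Schwarz argument that underlies $(\clubsuit)$ and Corollary \ref{1.2}; the only genuine work is the geometric reduction. I must check carefully that, under the single-branch-vertex hypothesis, $B_r(x)$ really is a star on which geodesics from the centre exist and do not leave the ball, and that the volume identity of the preceding lemma transfers verbatim to radius $r$. It is worth noting that a naive application of Corollary \ref{1.2} to $B_r(x)$ as a compact graph would only produce $c_P=2d_v$ (resp.\ $4$), since it uses the diameter $2r$ as the path-length bound; centring the estimate at $x$, so that only $d(x,y)\le r$ enters, is exactly what recovers the sharper constants stated.
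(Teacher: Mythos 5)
For the first assertion your argument is correct, and it follows the same one-dimensional route as the paper (fundamental theorem of calculus plus Cauchy--Schwarz along paths), with a twist that actually improves on the paper's own proof: the paper simply invokes Corollary \ref{1.2} with $p=q=2$, which produces the factor $\diam B_r(x)\cdot m(B_r(x))=2r\cdot m(B_r(x))$ and hence, as literally written, only $c_P=2d_v$ (resp.\ $4$); your replacement of the mean by $u(x)$ (legitimate, since the mean minimises $c\mapsto\int_{B_r(x)}|u-c|^2\,dm$) and your use of paths emanating from the centre, so that only $d(x,y)\le r$ enters, recovers exactly the stated $c_P=d_v$ (resp.\ $2$). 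Two small caveats. First, under the hypothesis the ball need not be an interval of length $2r$ or a finite tree: it can be a circle, or contain a cycle through $v$, and prongs may be cut short by degree-one vertices; but the only facts you actually use --- shortest paths from $x$ staying in the ball, $d(x,y)\le r$, and $m(B_r(x))\le d_v r$ resp.\ $2r$ --- persist in all these cases. Second, you can bypass the existence of geodesics entirely: for $y\in B_r(x)$ any path of length at most $d(x,y)+\epsilon<r$ stays in the open ball, and letting $\epsilon\to0$ in $|u(y)-u(x)|^2\le (d(x,y)+\epsilon)\int_{B_r(x)}|u'|^2\,dm$ gives the same bound with no structural discussion at all.

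The ``in particular'' part, however, contains a genuine gap: you set $c_P(Y)$ equal to the largest vertex degree occurring in $Y$. This is false, because a ball $B_r(x)\subset Y$ is not subject to the single-branch-vertex hypothesis, and as soon as it contains two or more vertices of degree $>2$ the volume bound $m(B_r(x))\le D\,r$ (with $D$ the maximal degree) breaks down. Concretely, two vertices of degree $3$ at mutual distance $\epsilon\ll r$ with $x$ between them give $m(B_r(x))=4r-\epsilon>3r$, and in a tree with all degrees equal to $3$ and edge lengths $\ell\ll r$ the volume of $B_r(x)$ is of order $2^{r/\ell}\ell$, which no bound of the form $D\,r$ can control. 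The repair is already contained in your own computation: your steps never use the structural hypothesis (in particular with the $\epsilon$-path variant above they hold for every ball), and they yield $\int_{B_r(x)}|u-\bar{u}_{B_r(x)}|^2\,dm\le r\, m(B_r(x))\int_{B_r(x)}|u'|^2\,dm$ in complete generality. For the local statement one must therefore take, as the paper does, $c_P(Y):=\sup\{m(B_r(x))/r :\ B_r(x)\subset Y\}$, which is finite by relative compactness of $Y$ but is in general much larger than the maximal degree occurring in $Y$.
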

\begin{proof}
Both inequalities follow from Corollary \ref{1.2} with $p=q=2$, since $\diam B_r(x) =2r$ and in the first case we have $m(B_r(x))\leq d_v \cdot r$. In general we have
\[ \int\limits_{B_r(x)} |u(y) - \bar{u}_{B_r(x)}|^2 \: dm(y) \leq m(B_r(x)) r \int\limits_{B_r(x)} |u'(y)|^2\:dm(y).\]
But since $Y$ is relatively compact the quantity $c_P := \sup \{ \frac{m(B_r(x))}{r} \ : B_r(x) \subset Y\}$ is finite.
\end{proof}
If we assume a more uniform local structure, we can even prove stronger versions of the above statements. To do so, we need the following definition.
\begin{Definition}
Let $\CX_\Gamma$ a metric graph as above. We say $\CX_\Gamma$ is of bounded geometry if
\[ D:= \sup_{v\in V} d_v  < \infty \]
and
\[ \ell:= \inf_{e\in E} l(e) > 0.\]
\end{Definition}
As soon as the graph has a uniform lower bound on the edge lengths, we can also assume w.l.g. that the edge lengths are bounded uniformly from above by $2\ell$ by introducing vertices of degree $2$.\\
For graphs of bounded geometry the volume doubling property holds uniformly locally on $\CX_\Gamma$ for all $0<r<\frac{\ell}{4}$ with uniform doubling constant
\[c_D = 1+\tfrac{D}{2}.\]
Furthermore for fixed $L> \frac{\ell}{4}=: \ell'$  the doubling property then holds also uniformly for $0<r<L$ with a different constant depending on the choice of $L$. We briefly sketch the proof taken from \cite{SaloffCoste-02}: it suffices to show that for all $x\in\CX_\Gamma$ and all $L\geq \ell'$ we have
\[m(B_{{L}+\frac{\ell'}{2}}(x)) \leq c_D^2 m(B_{L}(x)) \label{VD}\tag{$\diamondsuit$}.\]
Consider therefore a maximal set $\CB$ of points in $B_{{L}-\frac{\ell'}{2}}(x)$ at distance at least $\frac{\ell'}{2}$ apart, which forces the balls $B_{\frac{\ell'}{2}}(y)$, $y\in  \CB$, to be disjoint and contained in $B_{L'}(x)$. Hence
\[\sum_{y\in\CB} m(B_\frac{\ell'}{2}(y)) \leq m(B_{L}(x)).\]
Since $\CB$ is maximal, the balls $B_{\ell'}(y)$ cover $B_{{L}-\frac{\ell'}{2}}(x)$ and in particular the balls $B_{2\ell'}(y)$ cover $B_{{L}+\frac{\ell'}{2}}(x)$. This yields
\[m(B_{{L}+\frac{\ell'}{2}}(x)) \leq \sum_{y\in \CB} m(B_{2\ell'}(y)).\]
Now by applying twice the doubling property we have \eqref{VD} with a doubling constant of at most $c_D^{\frac{4{L}}{\ell'}}$: choose $n\in \IN$ such that $(n-1)\frac{\ell'}{2} \leq L \leq n \frac{\ell'}{2}$. Then we have
\begin{eqnarray*}
m(B_L(x)) &\leq &c_D^2 m(B_{L-\frac{\ell'}{2}}(x)) \leq \dots\\
 &\leq& (c_D^2)^{n-1} m(B_{L- (n-1)\frac{\ell'}{2}}(x)) \leq c_D^{4\frac{L}{\ell'}} m(B_{\frac{\ell'}{2}}(x)).
\end{eqnarray*}
Summarizing the discussion above, we have proven the following corollary.
\begin{Corollary}[Uniform local volume doubling]
Assume the graph is of bounded geometry and fix $L>0$. Then there exists $\nu=\nu(\ell, D, L)$ such that for all  $x\in \CX_\Gamma$ and all $r\in (0,L)$  we have
\[0< m(B_{2r}(x)) \leq 2^\nu \cdot m(B_r(x)) < \infty.\]
\end{Corollary}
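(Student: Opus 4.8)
The plan is to assemble the two scales of information already established in the discussion above: the small-scale doubling that comes from the Local volume doubling Lemma combined with bounded geometry, and the scale-growth estimate $(\diamondsuit)$ obtained from the covering argument. The proof is then purely a matter of organizing these two ingredients and converting the resulting multiplicative constant into the form $2^\nu$.

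First I would record the small-scale step. By bounded geometry every edge has length at least $\ell$, and since the subdivision used to bound edge lengths from above introduces only vertices of degree $2$, the vertices of degree larger than $2$ remain pairwise at distance at least $\ell$. Hence, taking $R = \ell' = \tfrac{\ell}{4}$, the ball $B_{2R}(x) = B_{\ell/2}(x)$ meets $\{v\in V : d_v > 2\}$ in at most one point, and this holds uniformly in $x\in\CX_\Gamma$. The Local volume doubling Lemma then gives, for every $x$ and every $r\in(0,\ell')$,
\[ m(B_{2r}(x)) \leq (1 + \tfrac{d_v}{2})\, m(B_r(x)) \leq (1 + \tfrac{D}{2})\, m(B_r(x)) = c_D\, m(B_r(x)),\]
where the middle step uses $d_v \leq D$. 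Positivity $0 < m(B_r(x))$ is automatic since $B_r(x)$ contains a nondegenerate subinterval of an edge, and finiteness $m(B_{2r}(x)) < \infty$ for all $r < L$ follows because bounded geometry forces any ball of radius at most $2L$ to contain only finitely many edges (the combinatorial ball has size at most $D^{O(L/\ell)}$), each of finite length.

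Next I would propagate this to the full range by iterating $(\diamondsuit)$. Fix $r\in[\ell',L)$. Growing the radius from $r$ to $2r$ in increments of $\tfrac{\ell'}{2}$ requires $N := \lceil 2r/\ell' \rceil \leq 2L/\ell' + 1$ applications of $(\diamondsuit)$, each legitimate because every intermediate radius is at least $\ell'$, so the hypothesis of $(\diamondsuit)$ is met at every step. This yields
\[ m(B_{2r}(x)) \leq (c_D^2)^N\, m(B_r(x)) \leq c_D^{\,2(2L/\ell' + 1)}\, m(B_r(x)).\]
Combining the two ranges and using $c_D \geq 1$, the doubling inequality holds for every $r\in(0,L)$ with the single constant $c_D^{\,2(2L/\ell'+1)}$. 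Setting $\nu := 2(2L/\ell' + 1)\log_2 c_D$, which depends only on $\ell$, $D$ and $L$, gives $m(B_{2r}(x)) \leq 2^\nu\, m(B_r(x))$, as claimed.

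The only genuinely delicate point is the uniformity in the base point $x$: one must check that neither the small-scale constant $c_D$ nor the admissible threshold $\ell'$ secretly depends on $x$. This is precisely what bounded geometry provides. The uniform lower bound $\ell$ on edge lengths guarantees, independently of $x$, that balls of radius below $\tfrac{\ell}{2}$ contain at most one high-degree vertex, and the uniform degree bound $D$ caps the local volume growth through $c_D = 1 + \tfrac{D}{2}$. Everything beyond this is the mechanical chaining argument sketched above.
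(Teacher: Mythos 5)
Your proof is correct and takes essentially the same route as the paper: the uniform small-scale doubling constant $c_D = 1+\tfrac{D}{2}$ from the Local volume doubling Lemma under bounded geometry, followed by iteration of the covering estimate $(\diamondsuit)$ to handle all radii $r\in(0,L)$. The only differences are cosmetic --- you chain from $r$ up to $2r$ directly while the paper chains $m(B_L(x))$ down to $m(B_{\ell'/2}(x))$, and you make explicit the detail (left implicit in the paper) that balls of radius $\ell/2$ contain at most one vertex of degree greater than $2$.
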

\begin{remark}{\rm
Associated with the doubling constant we define the local dimension to be $\nu = \frac{\log (D+2)}{\log 2} -1$. This is motivated by the fact that in $\IR^n$ we have ${\mathrm{vol}} (B_{2r}(x)) = 2^n {\mathrm{vol}}(B_r(x))$. So in this case the local dimension and the usual dimension agree. Note that every $\tilde{\nu}>\nu$ defines a local dimension as well and that the local dimension may change from point to point.\\
For further details concerning volume doubling and its consequences we refer to the book \cite{SaloffCoste-02}.}
\end{remark}
If the graph is of bounded geometry, then as consequence of Theorem \ref{PI}, the Poincaré inequality holds true uniform locally  as well:
\begin{Corollary}[Uniform local Poincaré inequality]
Assume the graph is of bounded geometry and fix $L>0$. Then there exists $c_P >0$ such that for all $0<r<L$ and $u\in W^{1,2} (B_r(x))$ we have
\[ \int\limits_{B_r(x)} |u(y) - \bar{u}_{B_r(x)}|^2 \: dm(y) \leq c_P r^2 \int\limits_{B_r(x)} |u'(y)|\:dm(y).\]
\end{Corollary}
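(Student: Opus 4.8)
The plan is to reduce the statement to a uniform bound on the ratio $m(B_r(x))/r$, which is precisely the quantity that governs the Poincar\'e constant in Theorem \ref{PI}. Recall from the proof of that theorem that every ball $B_r(x)$ may be viewed as a compact connected metric graph: its partial edges carry degree-one endpoints, and since $B_r(x)$ is star-shaped with respect to its centre, every geodesic from the centre stays inside, so the connecting paths of \eqref{path} remain in the ball and its intrinsic diameter is at most $2r$. Hence Corollary \ref{1.2} with $p=q=2$ gives the general estimate
\[\int_{B_r(x)} |u(y) - \bar u_{B_r(x)}|^2 \, dm(y) \leq m(B_r(x))\, r \int_{B_r(x)} |u'(y)|^2 \, dm(y)\]
for all $u \in W^{1,2}(B_r(x))$. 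Thus it suffices to produce a constant $c_0 = c_0(\ell, D, L)$ with $m(B_r(x)) \leq c_0\, r$ for all $x \in \CX_\Gamma$ and all $0<r<L$; the claim then follows with $c_P = c_0$.

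To obtain this bound I would argue by a direct counting estimate that uses both halves of the bounded geometry assumption. First, the portion of any single edge $e$ lying inside $B_r(x)$ has measure at most $2r$: writing $a = d(x, i(e))$ and $b = d(x, j(e))$, a point of $e$ at arclength parameter $t$ satisfies $d(x,y) = \min(a+t,\, b+l(e)-t)$, so it lies in $B_r(x)$ only if $a + t < r$ or $b + (l(e) - t) < r$, whence this portion has measure at most $(r-a)_+ + (r-b)_+ \leq 2r$. Second, the number of edges meeting $B_r(x)$ is bounded uniformly in $x$: since $\ell = \inf_e l(e) > 0$, any vertex of $B_r(x)$ is reached from $x$ after crossing at most $\lceil L/\ell \rceil$ edges, and since $D = \sup_v d_v < \infty$ the number of such vertices is at most $\sum_{k=0}^{\lceil L/\ell\rceil} D^k =: N(\ell, D, L)$, so the number of incident edges is at most $D\, N(\ell,D,L)$. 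Combining the two observations yields
\[m(B_r(x)) = \sum_{e \text{ meets } B_r(x)} m(e \cap B_r(x)) \leq 2 D\, N(\ell, D, L)\, r,\]
so $c_0 = 2 D\, N(\ell, D, L)$ does the job.

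The point I would be most careful about is the geometric bookkeeping for balls: that $B_r(x)$ is genuinely a compact metric graph to which Corollary \ref{1.2} applies, that star-shapedness bounds its intrinsic diameter by $2r$, and that the edge count is truly independent of the centre $x$. Both halves of bounded geometry are essential: without $\ell>0$ infinitely many short edges could accumulate in a fixed ball, and without $D<\infty$ branching could inflate the volume; either failure would destroy the uniformity of $c_0$. Alternatively, the same uniform bound on $m(B_r(x))/r$ can be read off from the Uniform local volume doubling Corollary together with the elementary base case $m(B_r(x)) \leq D r$ valid for $r < \ell/2$ (when $B_r(x)$ meets at most one vertex), by iterating the doubling inequality at most $\log_2(L/r)$ times.
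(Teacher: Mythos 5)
Your proposal is correct and follows essentially the paper's own (largely implicit) route: the paper deduces this corollary directly from Theorem \ref{PI}, whose proof reduces the Poincar\'e constant to a bound on $m(B_r(x))/r$, and your counting argument supplies exactly the uniform estimate $m(B_r(x)) \leq c_0(\ell,D,L)\,r$ that bounded geometry guarantees but which the paper leaves unstated. Apart from harmless bookkeeping (the edge containing $x$ can contribute up to $4r$ rather than $2r$ via paths re-entering through its endpoints, the edge count needs a $+1$ for that edge, and the iteration count in your doubling-based alternative should be of order $\log_2(2r/\ell)$ rather than $\log_2(L/r)$), the argument is complete; note also that the $|u'(y)|$ in the stated inequality is a typo for $|u'(y)|^2$, which is what you prove.
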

\begin{remark}{\rm
Note that assuming the lower bound on the lengths forces the metric space $(\CX_\Gamma, d)$, and hence $(\CX_\Gamma, \rho)$, to be complete.}
\end{remark}
\section{The parabolic Harnack inequality}
The aim of this section is to state the Harnack inequality and to give a brief sketch of the ideas of its proof. These ideas go back to the fundamental work of Moser \cite{Moser-61}, \cite{Moser-63} and \cite{Moser-71} and were based on the validity of local Poincaré and Sobolev inequalities. However in \cite{SaloffCoste-92} it was shown that the doubling property and the Poincaré inequality suffice to show the local Sobolev inequality. In our setting, i.e. in the case of strongly local Dirichlet forms this was generalized in \cite{Sturm-95} and \cite{Sturm-96}. So as we have shown volume doubling and the Poincaré inequality in the previous sections, we obtain a Harnack inequality, see section 4.2 and for a brief sketch of its proof see section 4.3 as well.

\medskip

First we introduce the notion of weak solutions.
\subsection{Weak solutions}
Let $I\subset \IR$ be an open interval. We say that $u:I \to W^{1,2}(\CX_\Gamma)$, $t\mapsto u_t$ is in $L^2(I\to W^{1,2}(\CX_\Gamma))$ if $u$ is Bochner measurable and
\[ \int\limits_I \|u_t\|_{1,2}^2 \: dt < \infty.\]
Denote by $W^{-1,2} (\CX_\Gamma)$ the dual of $W^{1,2}(\CX_\Gamma)$ with respect to the inner product of $L^2(\CX_\Gamma)$, such that we have
\[W^{1,2}(\CX_\Gamma) \subset L^2(\CX_\Gamma) \subset W^{-1,2}(\CX_\Gamma).\]
Since the embedding $W^{1,2}(\CX_\Gamma) \hookrightarrow L^2(\CX_\Gamma)$ is dense and continuous, the same is true for the embedding of $L^2(\CX_\Gamma) \hookrightarrow W^{-1,2}(\CX_\Gamma)$. As usual we say that $v: I \to W^{-1,2}(\CX_\Gamma)$ is a generalized derivative of $u:I\to W^{1,2}(\CX_\Gamma)$ if for all $\phi\in C_c^\infty(I)$
\[\int\limits_I u(t) \phi'(t) \: dt = - \int\limits_I v(t) \phi(t)\: dt\]
and write $\tfrac{\partial}{\partial t} u$ for $v$. We set $\CW(I\times \CX_\Gamma):= \{u\in L^2(I\to \CX_\Gamma) : \ \exists g \in L^2(I\to W^{-1,2}(\CX_\Gamma))\mbox{ s.t. } u'=g\} $ with norm
\[\|u\|_{\CW} := \biggl( \int\limits_I \|u_t\|^2_{1,2} + \|\tfrac{\partial}{\partial t} u_t \|^2 _{-1,2}\: dt \biggr)^\frac{1}{2}.\]
We will deal with functions which are locally in $\CW(I\times \CX_\Gamma)$: Let $G$ be an open subset of $\CX_\Gamma$ and denote by $Q=I\times G$ the parabolic cylinder. Then $\CW_\loc (Q)$ consists of all $dm\otimes dt$-measurable functions on $Q$ such that for all relatively compact $G'\subset G$ and every relatively compact, open interval $I'\subset I$ there exists a function $\tilde u\in \CW(I\times \CX_\Gamma)$ with $u=\tilde u$ on $I'\times  G'$. We say that a function $u$ belongs to $\CW_o (I\times G)$ if $u\in\CW(I\times \CX_\Gamma)$ and if for a.e. $t\in I$ the function $u_t$ has compact support in $G$. Note that a function $u\in\CW_o(I\times G)$ has only to vanish on $I\times \partial G$, but neither on the upper nor on the lower boundary  $\partial I \times G$. Roughly speaking the space $\CW_\loc$ is the appropriate space of weak solutions and $\CW_o$ the appropriate space of test functions with respect to the heat equation:
\begin{Definition}\label{31}
We say that $u$ is a local subsolution (resp. local supersolution) of the equation
\begin{eqnarray*}
-\Delta u &=& \frac{\partial }{\partial t} u \quad\mbox{on $Q$}
\end{eqnarray*}
if $u\in \CW_\loc (Q)$ and
\[\tag{$*$}\label{1}
\CE_J(u,\phi) := \int\limits_J \CE(u,\phi) \: dt + \int\limits_J (\tfrac{\partial}{\partial t} u, \phi) \: dt \leq 0
\]
(or $\CE_J(u,\phi) \geq 0$, resp.) for all relatively compact, open $J\subset I$ and all nonnegative $\phi\in \CW_o(Q)$. The function $u$ is called a local solution if it is a local subsolution and a local supersolution.  In this case \eqref{1} holds true with ''$=$'' for all $\phi \in \CW_o(Q)$.
\end{Definition}
\begin{remark}{\rm We have seen in section 3 that in the case of a graph of bounded geometry we have
\[m(B_{L}(x))\leq c_D^{4\frac{L}{\ell'}} m(B_{\frac{\ell'}{2}}(x)).\]
for all $L>0$. Hence the volume growth is at most exponentially. This implies that
\[\int\limits_1^\infty \frac{r}{\ln m(B_r(x))} \: dr = \infty.\]
By Theorem 4 in \cite{Sturm-94} this implies that $\CE$ is stochastically complete, which is equivalent to each one of the following:
\begin{itemize}
\item $T_t 1 = 1$ for some (and then for all) $t>0$, where $T_t$ denotes the associated semigroup to $\CE$.
\item For some (and then for all) $\alpha >0$ every nonnegative solution $u \in L^\infty(\CX_\Gamma)$ of the equation $(\Delta + \alpha) u =0$ is identically 0.
\end{itemize}
For more details on stochastical completeness see \cite{Sturm-94,Grigoryan-99}.}
\end{remark}
\subsection{Harnack inequality}
\begin{Theorem}\label{HI}
Fix $0<\epsilon<\eta<\sigma<1$ and $\zeta \in (0,1)$. Assume the metric graph is of bounded geometry and fix $L>0$. Then there exists a constant $c_H$ such that for all balls $B_r(x)$, $0<r<L$, $x\in \CX_\Gamma$ and all $s\in \IR$ we have
\begin{equation}
\sup_{(t,y)\in Q_-} u(s,y) \leq c_H \inf_{(t,y)\in Q_+} u(s,y) \label{PHI} \tag{PHI}
\end{equation}
whenever $u$ is a nonnegative local solution of the equation $-\Delta u = \frac{\partial}{\partial t} u$ in $Q= (s, s+r^2) \times B_r(x)$, where $Q_- = (s+\epsilon r^2, s + \eta r^2 ) \times \zeta B_{\zeta r}(x)$ and $Q_+ = (s+\sigma r^2, s+ r^2) \times B_{\zeta r}(x)$. Here the constant $c_H$ depends only on $\nu$, $c_P$ and additionally on the choice of parameters $\epsilon$, $\eta$, $\sigma$ and $\zeta$, but not on $u$, $r$, $s$ and $x$.
\end{Theorem}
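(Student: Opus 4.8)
The plan is to invoke the general machinery developed by Sturm \cite{Sturm-95,Sturm-96} for strongly local Dirichlet forms, which says that the parabolic Harnack inequality \eqref{PHI} follows from two geometric/analytic ingredients holding uniformly on the scale $0<r<L$: the volume doubling property and the scale-invariant Poincar\'e inequality on balls. Since the graph is assumed to be of bounded geometry, both of these have already been established in the previous section---the Uniform Local Volume Doubling corollary supplies a doubling constant $2^\nu$ with $\nu=\nu(\ell,D,L)$ valid for all $x\in\CX_\Gamma$ and all $r\in(0,L)$, and the Uniform Local Poincar\'e inequality supplies a constant $c_P$ with the required $r^2$-scaling on every ball $B_r(x)\subset\CX_\Gamma$ with $0<r<L$. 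Thus the bulk of the work is to check that our setting meets the hypotheses of Sturm's theorem and then to quote it.

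First I would verify the structural assumptions of \cite{Sturm-95,Sturm-96}: that $(\CX_\Gamma,\CE,m)$ is a regular, strongly local Dirichlet space (noted in Section 1.2, citing \cite{BoutetdeMonvelLS-07}), that the intrinsic metric $\rho$ is well defined and induces the original topology, and crucially that $\rho=d$ is a complete, geodesic metric whose balls are relatively compact. Completeness was recorded in the remark closing Section 3 as a consequence of the lower length bound $\ell>0$; relative compactness of balls follows from finiteness of vertex degrees together with bounded geometry. The notions of weak sub-/supersolution in $\CW_\loc(Q)$ and test functions in $\CW_o(Q)$ from Definition \ref{31} are precisely those used in Sturm's formulation, so no translation is needed there.

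The second and final step is to feed the two uniform estimates into Sturm's equivalence. Sturm proves that volume doubling plus the Poincar\'e inequality together imply a local parabolic Harnack inequality of exactly the form \eqref{PHI}, with the geometric constants $c_H$ depending only on the doubling exponent $\nu$, the Poincar\'e constant $c_P$, and the chosen time/space cutoff parameters $\epsilon,\eta,\sigma,\zeta$. Because our $\nu$ and $c_P$ are uniform over all centers $x$ and all radii $r\in(0,L)$, the resulting $c_H$ inherits this uniformity: it is independent of $u$, $r$, $s$ and $x$, as claimed. The translation invariance in the time variable (allowing arbitrary $s\in\IR$) is immediate since the heat equation $-\Delta u=\partial_t u$ has $t$-independent coefficients.

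The main obstacle I expect is not any single hard estimate but rather the careful bookkeeping required to confirm that the hypotheses match Sturm's framework verbatim---in particular that \emph{scale-invariant} doubling and Poincar\'e hold simultaneously on the same range of radii, and that the relative compactness and completeness of $(\CX_\Gamma,d)$ are genuinely available (which is why bounded geometry, rather than mere local volume doubling, is assumed here). Once these are in place the proof is a citation: there is no need to reprove Moser's iteration, since the passage from doubling and Poincar\'e to \eqref{PHI} is exactly the content of \cite{Sturm-95,Sturm-96}. Accordingly I would keep the argument short and defer the detailed Moser-iteration mechanics to the sketch promised in Section 4.3.
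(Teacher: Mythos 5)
Your proposal is correct and follows essentially the same route as the paper: verify that the bounded-geometry assumption yields the uniform local volume doubling and Poincar\'e inequality from Section 3, check the structural hypotheses of the strongly local Dirichlet form framework (regularity, strong locality, $\rho=d$, completeness), and then invoke the machinery of \cite{Sturm-95,Sturm-96}. The paper's Section 4.3 merely unpacks that citation into its intermediate steps (local Sobolev inequality via Saloff-Coste, sub-/supersolution estimates, weighted Poincar\'e inequality, logarithmic estimates, and the weak Harnack inequality), which you correctly identify as the content of Sturm's theorem rather than something to be reproved.
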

Of course, there are simple example of graphs which are not of bounded geometry, but then still a local version of the parabolic Harnack inequality is valid, see \cite{Sturm-96}.
\begin{Theorem}
Fix $0<\epsilon<\eta<\sigma<1$, $\zeta \in (0,1)$ and let $Y\subset \CX_\Gamma$ be precompact. Then there exists a constant $c_H = c_H(Y)$ such that for all $x\in \CX_\Gamma$, $r>0$ with $B_r(x)\subset Y$ and all $s\in \IR$ we have
\begin{equation*}
\sup_{(t,y)\in Q_-} u(t,y) \leq c_H \inf_{(t,y)\in Q_+} u(t,y)
\end{equation*}
whenever $u$ is a nonnegative local solution of the equation $-\Delta u = \frac{\partial}{\partial t} u$ in $Q= (s, s+r^2) \times B_r(x)$, where $Q_- = (s+\epsilon r^2, s + \eta r^2 ) \times B_{\zeta r}(x)$ and $Q_+ = (s+\sigma r^2, s+ r^2) \times  B_{\zeta r}(x)$. Here the constant $c_H$ depends only on $\nu(Y)$, $c_P(Y)$ and additionally on the choice of parameters $\epsilon$, $\eta$, $\sigma$ and $\zeta$, but not on $u$, $r$, $s$ and $x$.
\end{Theorem}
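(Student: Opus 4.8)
The plan is to invoke the abstract characterization of the parabolic Harnack inequality for strongly local regular Dirichlet forms due to Sturm (\cite{Sturm-95,Sturm-96}): such a form satisfies \eqref{PHI} on a region provided it enjoys a volume doubling property together with a Poincar\'e inequality there, both formulated with respect to the balls of the intrinsic metric. Since all the geometric input has already been assembled in Section 3, the proof reduces to checking that the hypotheses of Sturm's theorem are met uniformly on the precompact set $Y$, with constants independent of the individual ball.

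First I would fix $Y\subset \CX_\Gamma$ precompact and recall that $\CE$ is a regular, strongly local Dirichlet form whose intrinsic metric $\rho$ coincides with the graph metric $d$ by Proposition \ref{intrinsic}; hence the intrinsic balls are exactly the metric balls $B_r(x)$. By the ``in particular'' part of the local volume doubling Lemma there exists $\nu=\nu(Y)$ so that
\[ m(B_{2r}(x)) \leq 2^\nu \, m(B_r(x)) \]
for every $x$ and $r$ with $B_{2r}(x)\subset Y$; and by Theorem \ref{PI} there exists $c_P=c_P(Y)$ with
\[ \int\limits_{B_r(x)} |u-\bar{u}_{B_r(x)}|^2 \: dm \leq c_P\, r^2 \int\limits_{B_r(x)} |u'|^2 \: dm \]
for all $u\in W^{1,2}(B_r(x))$ whenever $B_r(x)\subset Y$. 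The crucial point is that both constants are finite \emph{uniformly} over $Y$: relative compactness furnishes an upper bound on the vertex degrees and a positive lower bound on the edge lengths occurring in $Y$, so only finitely many vertices of degree greater than two can be met and neither the doubling nor the Poincar\'e constant can degenerate.

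With volume doubling and the Poincar\'e inequality in hand, I would then apply Sturm's theorem (\cite{Sturm-96}) directly. It produces a constant $c_H$ depending only on the doubling constant $2^\nu$, on $c_P$, and on the scale parameters $\epsilon,\eta,\sigma,\zeta$, such that every nonnegative local solution of $-\Delta u=\frac{\partial}{\partial t}u$ on $Q=(s,s+r^2)\times B_r(x)$ obeys the stated oscillation estimate between $Q_-$ and $Q_+$. Because $\nu(Y)$ and $c_P(Y)$ are uniform over all admissible balls in $Y$, the resulting $c_H=c_H(Y)$ is independent of $x$, $r$ and $s$, which is exactly the claim. The parabolic scaling $t\mapsto t/r^2$ built into the cylinders $Q,Q_\pm$ is precisely the one under which Sturm's hypotheses are scale invariant, so no separate rescaling argument is needed.

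The main obstacle I anticipate is not analytic but one of bookkeeping: one must confirm that the notion of local sub-/supersolution in Definition \ref{31}, formulated through the spaces $\CW_\loc$ and $\CW_o$, matches the weak formulation used in \cite{Sturm-96}, and that the \emph{local} — rather than global — validity of doubling and Poincar\'e genuinely suffices for the Harnack conclusion on cylinders built from balls $B_r(x)\subset Y$. This is handled by the fact that the underlying Moser iteration, as carried out by Sturm, is entirely local: it only ever tests the equation against functions supported in balls contained in $Y$, so the estimates never leave $Y$, and the global bounded-geometry hypothesis of Theorem \ref{HI} may be replaced by its uniform-on-$Y$ counterpart without any change in the argument.
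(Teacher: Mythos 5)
Your proposal is correct and follows essentially the same route as the paper: the paper likewise treats this local version as a direct consequence of Sturm's theorem (\cite{Sturm-96}), feeding in the locally uniform volume doubling and Poincar\'e inequalities on the precompact set $Y$ from Section 3, with the intrinsic metric identified with $d$ via Proposition \ref{intrinsic}. The paper gives no further argument beyond this citation, so your additional remarks on the uniformity of $\nu(Y)$, $c_P(Y)$ and on matching the weak-solution framework are a faithful elaboration of the same proof.
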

\subsection{The road to the parabolic Harnack inequality}
In this section we give a survey on the proof of the Harnack inequality \eqref{PHI} in the case of a graph of bounded geometry as in theorem \ref{HI}.

\medskip

{{\textbf{Assumption }} In what follows, we will assume that the graph is of bounded geometry, that is
\[ D:= \sup_{v\in V} d_v  < \infty \]
and
\[ \ell:= \inf_{e\in E} l(e) > 0.\]
We have seen in section 3, that this implies that the volume doubling and the Poincaré inequality hold locally uniform for all $0<r<L$, where $L>0$ is fixed.}

\medskip

As mentioned above, one crucial ingredient for the proof of (PHI) is a local version of the Sobolev inequality.
It was proven by Saloff-Coste, see \cite{SaloffCoste-92}, that the volume doubling and Poincaré inequality suffice to show it.
\begin{Theorem}
There exists a constant $c_S>0$ depending only on $\nu$ and $c_P$ such that we have
\[\biggl( \fint\limits_{B_r(x)} |u|^\frac{2\nu}{\nu-2} \: dm \biggr)^\frac{\nu-2}{\nu} \leq c_S   r^2 \biggl( \fint\limits_{B_r(x)} |u'(x)|^2\: dm(x) + r^{-2} \fint\limits_{B_r(x)} |u|^2 \:dm\biggr)\]
for all $0<r<L$, $x\in \CX_\Gamma$ and $u\in W^{1,2}(B_r(x))$ with $\supp u \subset B_r(x)$.
\end{Theorem}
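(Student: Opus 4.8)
The statement is precisely the implication, due to Saloff-Coste \cite{SaloffCoste-92}, that volume doubling together with a Poincar\'e inequality yields a family of Sobolev inequalities; both hypotheses are available to us from the Uniform local volume doubling corollary and from Theorem \ref{PI}. The plan is to reproduce Saloff-Coste's argument in our concrete setting. First I would fix the dimension parameter: from the doubling property one extracts a polynomial volume lower bound, namely a constant $c>0$ and an exponent $\nu$ (which, by the Remark, may be taken $>2$) such that
\[
\frac{m(B_\rho(y))}{m(B_r(x))} \geq c \Bigl( \frac{\rho}{r} \Bigr)^\nu
\qquad \text{for } y \in B_r(x),\ 0<\rho\leq r .
\]
This is the only way the exponent $\tfrac{2\nu}{\nu-2}$ enters, and it follows by iterating the doubling inequality $m(B_{2\rho}(y)) \leq 2^\nu m(B_\rho(y))$.

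Second, I would establish a pointwise Riesz-potential bound. Writing $B=B_r(x)$ and telescoping over the dyadic balls $B_k := B_{2^{-k}r}(y)$ centered at $y$, one has $\bar u_{B_k} \to u(y)$ for a.e. $y$ by Lebesgue differentiation, so that
\[
|u(y)-\bar u_{B_0}| \leq \sum_{k\geq 0} |\bar u_{B_{k+1}} - \bar u_{B_k}| ,
\]
and each increment is controlled by the Poincar\'e inequality of Theorem \ref{PI} on $B_k$ together with the doubling property. Summing the resulting geometric series, and comparing the coarsest average $\bar u_{B_0}=\bar u_{B_r(y)}$ with $\bar u_B$ by one more application of doubling and Poincar\'e, gives the potential estimate
\[
|u(y)-\bar u_B| \leq C \int\limits_B \frac{d(y,z)}{m\bigl(B_{d(y,z)}(y)\bigr)}\, |u'(z)|\, dm(z).
\]

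Third, with the volume lower bound from the first step the fractional integral operator $g\mapsto \int_B \frac{d(y,z)}{m(B_{d(y,z)}(y))}\, g(z)\, dm(z)$ maps $L^2(B)$ boundedly into $L^{2\nu/(\nu-2)}(B)$ on our doubling space; applying this with $g=|u'|$ bounds $\|u-\bar u_B\|_{2\nu/(\nu-2)}$ by $C r \|u'\|_2$ after normalizing to averages. Finally, since $\supp u\subset B$, the mean is estimated by $|\bar u_B| \leq \fint_B |u|\, dm$, which by H\"older and volume bookkeeping is absorbed into the term $r^{-2}\fint_B |u|^2\, dm$; rewriting all integrals as averages then produces the stated inequality with $c_S$ depending only on $\nu$ and $c_P$.

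The main obstacle lies in the second and third steps. The telescoping requires a quantitative use of doubling both to replace the Poincar\'e right-hand side over $B_k$ by one over a comparable ball and to sum the geometric series with the correct powers of the radius; and the mapping property of the fractional integral operator, where the exponent $\tfrac{2\nu}{\nu-2}$ is produced, rests on the volume lower bound and a standard but technical weak-type and interpolation argument for integral operators on doubling spaces. The remaining delicate point is the passage from the estimate for $u-\bar u_B$ to one for $u$ itself, which genuinely uses the support condition $\supp u\subset B$ and is responsible for the extra term $r^{-2}\fint_B|u|^2\,dm$ on the right-hand side.
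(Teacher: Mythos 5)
You have read the paper's treatment correctly in one respect: the paper offers no proof of this theorem at all, but presents it as the known implication, due to Saloff-Coste \cite{SaloffCoste-92}, that volume doubling plus a Poincar\'e inequality yield the local Sobolev inequality, the hypotheses having been verified in section 3 (uniform local volume doubling and Theorem \ref{PI}). Your proposal therefore does strictly more than the paper: it sketches a proof of the cited implication itself, by the chaining/Riesz-potential method (polynomial volume lower bound from iterated doubling, telescoping over dyadic balls, mapping properties of the fractional integral). That is a legitimate and standard route -- closer in style to the Haj\l{}asz--Koskela approach than to Saloff-Coste's own iteration argument -- and the overall architecture, including the absorption of the mean into the term $r^{-2}\fint_B |u|^2\,dm$, is sound.

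However, your second step contains a genuine gap. The pointwise bound
\[
|u(y)-\bar{u}_B| \leq C \int\limits_B \frac{d(y,z)}{m\bigl(B_{d(y,z)}(y)\bigr)}\,|u'(z)|\,dm(z),
\]
with $|u'|$ to the \emph{first} power, requires an $L^1$--$L^1$ Poincar\'e inequality on the dyadic balls, namely $\fint_{B_k}|u-\bar{u}_{B_k}|\,dm \leq C\,2^{-k}r\,\fint_{B_k}|u'|\,dm$. Theorem \ref{PI}, which is what you invoke, is the $L^2$--$L^2$ version; feeding it into the telescoping via Cauchy--Schwarz controls each increment only by $2^{-k}r\bigl(\fint_{B_k}|u'|^2\,dm\bigr)^{1/2}$, and the sum of these $L^2$ averages dominates, rather than is dominated by, the stated potential of $|u'|$ (by Jensen each $L^2$ average exceeds the corresponding $L^1$ average), so the claimed pointwise estimate does not follow from what you cite, and the subsequent application of the $L^2\to L^{2\nu/(\nu-2)}$ bound for the order-one Riesz potential has nothing to act on. Two repairs are available. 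Either observe that in this setting the $(1,1)$-Poincar\'e inequality does hold: it follows from Corollary \ref{1.2} with $p=q=1$ exactly as Theorem \ref{PI} follows from the case $p=q=2$, using $m(B_\rho(y))\leq C\rho$ for $\rho<L$ under bounded geometry; with that substitution your argument runs as written. Or keep the $L^2$ data, bound $|u-\bar{u}_B|$ by the square root of an order-two Riesz potential applied to $|u'|^2$, deduce a weak $L^{2\nu/(\nu-2)}$ estimate, and upgrade it to the strong inequality by Maz'ya-type truncation. A further small correction: absorbing the mean does not use the support condition at all, since $|\bar{u}_B| \leq \bigl(\fint_B|u|^2\,dm\bigr)^{1/2}$ always; what $\supp u \subset B_r(x)$ really buys is that $u$ may be extended by zero, so that the chaining balls centered near the boundary, which stick out of $B_r(x)$, cause no trouble.
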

Hence one has all tools to imitate the original proof by Moser to obtain the sub- and supersolution estimates. In the context of Riemannian manifolds this was done in \cite{SaloffCoste-95}. In the case of strongly local Dirichlet spaces it was done in \cite{Sturm-95} and yields in our setting:
\begin{Theorem}
\begin{itemize}
\item Fix $1<p<\infty$ and let $\delta \in (0,1)$. Then, for any nonnegative local subsolution $u$ of the equation $-\Delta u = \frac{\partial}{\partial t} u$ in $Q = (s, s+r^2) \times B_r (x)$ we have the estimate
\[ \sup_{Q_\delta} u^p \leq C \delta^{-\nu-2} (r^2 m(B_r(x)))^{-1} \iint\limits_Q |u|^p \: dm \: dt,\]
with $Q_\delta = (s+\delta r^2, s+r^2 )\times B_{(1-\delta)r}(x)$, where the constant $C>0$ is independent of $u$, $\delta$, $s$ and the ball of radius $0<r<L$.
\item Fix $0<p<\infty$ and let $\delta \in (0,1)$. Then for any nonnegative local supersolution $u$ of the equation $-\Delta u = \frac{\partial}{\partial t} u$ in $Q = (s, s+r^2) \times B_r(x)$ we have the estimate
\[ \sup_{Q_\delta} u^{-p} \leq C \delta^{-\nu-2} (r^2 m(B_r(x)))^{-1} \iint\limits_Q |u|^{-p} \: dm \: dt \]
with $Q_\delta = (s+\delta r^2, s+r^2 )\times B_{(1-\delta)r}(x)$, where the constant $C>0$ is independent of $u$, $\delta$, $s$ and of the ball of radius $0<r<L$.
\item Fix $0<p_0< 1+ \frac{2}{\nu}$. Then for all $0<\delta<1$ and all $0< p \leq p_0$ any nonnegative local supersolution $u$ of the equation $-\Delta u = \frac{\partial}{\partial t} u$ in $Q = (s, s+r^2) \times B_r(x)$ satisfies
\[ \Bigl(\iint\limits_{Q_{\delta}'} |u|^{p_0}\: dm dt\Bigr)^\frac{p}{p_0} \leq \Bigl( C \delta ^{-\nu -2} (r^2 m(B_r(x)))^{-1}\Bigr)^{1-\frac{p}{p_0}} \iint\limits_Q |u|^p \: dm dt,\]
where $Q_\delta ' = (s, s+ (1-\delta)r^2) \times B_{(1-\delta)r}(x)$. Here, the constant $C$ is independent of $\delta$, $p$, $u$, $s$ and of the ball of Radius $0<r<L$ but it depends on $p_0$.
\end{itemize}
\end{Theorem}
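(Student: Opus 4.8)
The plan is to read these three bounds as Moser's parabolic mean value inequalities and to obtain them by running Moser's iteration, whose driving engine is the local Sobolev inequality stated just above. Since the graph is of bounded geometry, volume doubling and the Poincar\'e inequality --- and hence that Sobolev inequality --- hold uniformly locally for all $0<r<L$, so the hypotheses of Sturm's abstract iteration scheme \cite{Sturm-95} are met and the three estimates follow from it; what I describe below is the shape of that argument. First I would establish the energy (Caccioppoli) estimate. For a nonnegative local subsolution $u$ and an exponent $q\geq \tfrac{1}{2}$ one tests the weak formulation \eqref{1} with $\phi=\psi^2 u^{2q-1}$ (after truncating $u$ at a level $N$ and letting $N\to\infty$ to make the power admissible), where $\psi=\psi(t,y)$ is a nonnegative space-time cutoff in $\CW_o(Q)$. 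Using the chain rule for the energy measure, integrating the term $(\partial_t u)\,\psi^2 u^{2q-1}$ by parts in $t$, and absorbing with Young's inequality the energy of $u^q$ generated on the right, one reaches a bound of the form
\[
\sup_{t}\int \psi^2 u^{2q}\,dm \;+\; \iint \bigl|(\psi u^{q})'\bigr|^2\,dm\,dt \;\leq\; C\,q^2 \iint \bigl(|\psi'|^2+|\tfrac{\partial}{\partial t}\psi|\bigr)u^{2q}\,dm\,dt .
\]

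Next I would combine this with the Sobolev inequality. Interpolating the Sobolev exponent $\tfrac{2\nu}{\nu-2}$ against the $\sup_t L^2$ term upgrades $L^{2q}$ control of $u$ on an inner cylinder to $L^{2q\kappa}$ control, with the parabolic gain $\kappa=1+\tfrac{2}{\nu}>1$. I would then iterate along $q_k=\tfrac{p}{2}\kappa^{k}$ over a nested family of parabolic cylinders shrinking from $Q$ down to $Q_\delta$, each cutoff costing a factor comparable to $\delta^{-1}$; the product of the per-step constants converges because $\sum_k k\,\kappa^{-k}<\infty$, and tracking the $\delta$-losses produces precisely the factor $\delta^{-\nu-2}$. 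Starting the iteration at the base exponent $q_0=\tfrac{p}{2}$ yields the first estimate for every $p>1$. The second estimate is the very same downward iteration applied to $w=u^{-1}$: since $u$ is a positive supersolution, each negative power satisfies the same Caccioppoli-type inequality as a subsolution, so the iteration runs unchanged and delivers the bound for all $p>0$.

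The hard part will be the third estimate, and it is of a genuinely different nature. For positive powers $u^{\beta}$ with $0<\beta<1$ the supersolution property gives only a one-sided (forward-in-integrability) inequality, and this improvement can be sustained only below the critical crossover exponent $p_0<1+\tfrac{2}{\nu}=\kappa$; one therefore cannot iterate up to a supremum but only from $L^{p}$ to $L^{p_0}$. The delicate point is that the whole argument must be carried out uniformly as $p\to 0$, so that the constants coming from the repeated energy-plus-Sobolev step do not blow up in $p$ and the iteration terminates exactly at $p_0$ rather than diverging. Controlling these constants and the finite iteration is where Sturm's careful bookkeeping in \cite{Sturm-95} --- or, classically, a telescoping argument of Bombieri--Giusti type --- is needed. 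Once this $L^{p}$-to-$L^{p_0}$ bound is secured, all three statements hold with constants depending only on $\nu$ and $c_P$ (and on the chosen parameters), as asserted.
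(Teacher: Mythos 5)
Your proposal takes essentially the same route as the paper: the paper also proves this theorem by noting that bounded geometry yields uniform local volume doubling and the Poincar\'e inequality (hence the local Sobolev inequality via Saloff-Coste) and then invoking Sturm's Moser-iteration results \cite{Sturm-95}, which is precisely your reduction. Your sketch of the underlying machinery --- the Caccioppoli estimate from testing with $\psi^2 u^{2q-1}$, the Sobolev upgrade with parabolic gain $\kappa = 1+\tfrac{2}{\nu}$, the negative-power iteration for the second bound, and the crossover restriction $p_0 < 1+\tfrac{2}{\nu}$ forcing only an $L^p$-to-$L^{p_0}$ improvement in the third --- correctly describes what happens inside that citation, so there is nothing to flag.
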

These results can be viewed as the easy part of the proof. The more technical one is to deduce from the supersolution estimates a weak Harnack inequality. A basic tool will be the following weighted Poincaré inequality, which is once again a consequence of the volume doubling and the strong Poincaré inequality, see \cite{SaloffCoste-95} and \cite{Sturm-96}.
\begin{Lemma}
Fix $\delta \in (0,1)$. Then there exists a constant $c_{P'}$ such that for all $0<r< L$, $x_0\in \CX_\Gamma$ and $u\in W^{1,2} (B_r(x))$
\[\int\limits_{B_r(x_0)} |u-u_{B,\psi}| \psi^2 \: dm \leq C_P' r^2 \int\limits_{B_r(x)} \psi^2 d\Gamma(u)\]
where $\psi (y)= (1-\frac{d(y,B_{(1-\delta)r }(x))}{\delta r} )_+$ and $u_{B,\psi} = \int\limits_{B_r(x)} u \psi^2 \: dm / \int\limits_{B_r(x)} \psi^2 \: dm$. Here the constant $C_p '$ depends only on $C_p$, $\nu$ and $\delta$.
\end{Lemma}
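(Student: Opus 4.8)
The plan is to deduce the weighted inequality from the unweighted local Poincar\'e inequality (Theorem~\ref{PI}, in its uniform form under bounded geometry) together with volume doubling, following the self-improvement scheme of Saloff-Coste \cite{SaloffCoste-95} and Sturm \cite{Sturm-96}. Throughout write $B=B_r(x)$ and $B'=B_{(1-\delta)r}(x)$, so that $\psi\equiv 1$ on $B'$, $\psi$ is $(\delta r)^{-1}$-Lipschitz and $\supp\psi\subseteq\overline{B}$. I will work with the square on the left, i.e. with the $L^2$ weighted Poincar\'e inequality, which is the form that feeds Moser's iteration. The genuine difficulty is that the weight $\psi^2$ occurs on both sides: near $\partial B$, where $\psi$ is small, the gradient must be damped by the same factor, so a naive use of Theorem~\ref{PI} (which would place the full, unweighted $\int_B d\Gamma(u)$ on the right) is not admissible.

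First I would remove the weighted mean. For any constant $c$, Jensen's inequality applied to the probability measure $\psi^2\,dm/\int_B\psi^2\,dm$ gives $|c-u_{B,\psi}|^2\le \int_B|u-c|^2\psi^2\,dm\,/\int_B\psi^2\,dm$, whence
\[\int_B|u-u_{B,\psi}|^2\psi^2\,dm\le 4\int_B|u-c|^2\psi^2\,dm.\]
Choosing $c=\bar u_{B'}$ reduces the claim to bounding $\int_B|u-\bar u_{B'}|^2\psi^2\,dm$ by $C_P'r^2\int_B\psi^2\,d\Gamma(u)$. On the core ball $B'$ one is done immediately: since $\psi\le 1$ there while $\psi\equiv 1$ on $B'$, Theorem~\ref{PI} applied to $B'$ yields $\int_{B'}|u-\bar u_{B'}|^2\psi^2\,dm\le c_P r^2\int_{B'}\psi^2\,d\Gamma(u)\le c_P r^2\int_B\psi^2\,d\Gamma(u)$.

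The remaining contribution comes from the annular band $A=B\setminus B'$ of width $\delta r$, and this is where chaining enters. I would slice $A$ into shells $A_k$ on which $\psi$ is comparable to a single value $\psi_k$, and for $y\in A_k$ telescope the average back to the core through a chain of overlapping balls $B'=Q_0,Q_1,\dots,Q_k\ni y$ of comparable radii, writing $|u(y)-\bar u_{B'}|\le|u(y)-\bar u_{Q_k}|+\sum_{j\le k}|\bar u_{Q_j}-\bar u_{Q_{j-1}}|$. Each term is controlled by the unweighted Poincar\'e inequality on a ball comparable to $Q_j$, and the local volume doubling lemma guarantees that averages over consecutive, overlapping balls are comparable, so the number of chaining steps and the overlap multiplicities stay bounded in terms of $\nu$ and $\delta$. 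Squaring, summing over the shells, and using $\psi\approx\psi_k$ on $A_k$ together with $\psi_j\gtrsim\psi_k$ for $j\le k$ then lets one redistribute the weight onto the gradient integrals shell by shell, producing the weighted right-hand side $\int_A\psi^2\,d\Gamma(u)$.

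The main obstacle is exactly this last bookkeeping in the annulus: the telescoping must be organized so that the sum of individually unweighted Poincar\'e terms can be reabsorbed into a single weighted integral $\int_B\psi^2\,d\Gamma(u)$ without the shell count inflating the constant beyond an admissible $\delta$-dependent factor, and volume doubling is the tool that keeps the average comparisons and overlap counts under control. I note that in the present setting this step can also be carried out by hand: a ball $B_r(x)$ with $r<L$ contains at most one vertex of degree $>2$, so it is either an interval or finitely many intervals glued at a single vertex; on an interval the weighted Poincar\'e inequality for the tent weight $\psi$ is an elementary one-dimensional computation, and continuity of $u$ at the gluing vertex patches the pieces together. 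Collecting the core estimate with the annular estimate then yields a constant $C_P'$ depending only on $c_P$, $\nu$ and $\delta$, as claimed.
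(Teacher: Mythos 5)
The paper itself offers no proof of this lemma: it is quoted as a known consequence of volume doubling and the strong Poincar\'e inequality, with the proofs deferred to \cite{SaloffCoste-95} and \cite{Sturm-96}. Your main scheme --- replace the weighted mean $u_{B,\psi}$ by an arbitrary constant via Jensen (cost: a factor $4$), dispose of the core ball $B_{(1-\delta)r}(x)$ with the unweighted Theorem \ref{PI} (where $\psi\equiv 1$), and treat the annulus by a Whitney-type chaining argument with volume doubling controlling overlaps --- is precisely the scheme of those references, so your route is the intended one. Your decision to prove the $L^2$ version (with $|u-u_{B,\psi}|^2$ on the left) is also correct; the missing exponent in the paper's statement is a typo, since the subsequent log-estimate lemma requires the squared form.

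However, the two mechanisms you invoke to settle the hard step (the annulus bookkeeping) both fail, so that step remains genuinely open in your write-up. First, the claim that a ball $B_r(x)$ with $r<L$ contains at most one vertex of degree $>2$ is false: $L$ is an arbitrary fixed scale, in general much larger than the minimal edge length $\ell$ (recall the edge lengths are normalized into $[\ell,2\ell]$), so a ball of radius $r$ near $L$ can contain on the order of $r/\ell$ or more branching vertices --- in the lattice $\IZ^2$ with unit edges, every ball of radius $5$ contains many vertices of degree $4$. The one-vertex situation is exactly the hypothesis of the paper's local Theorem \ref{PI}, which only covers radii below the edge-length scale; the entire point of the uniform statements for all $0<r<L$ is the covering/chaining extension beyond that scale, so there is no ``by hand'' shortcut and chaining is mandatory, not optional. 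Second, within the chaining route, the assertion that the number of chaining steps stays bounded in terms of $\nu$ and $\delta$ is also false: the Whitney chain from a point $y$ in the annulus back to the core has length of order $\log(1/\psi(y))$, which blows up as $y\to\partial B_r(x)$. The actual proof compensates unbounded chain length by the geometric decay of the weight $\psi^2$ across the shells, together with doubling to bound how many chains pass through any given ball; this summation is the real content of the lemma and is exactly what is carried out in \cite{SaloffCoste-95} and in Section 5.3 of \cite{SaloffCoste-02}. As written, your proposal names the difficulty but resolves it only by appeal to two statements that are not true in this setting.
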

Using this inequality, one obtains estimates of the size of the logarithm of a local solution. We denote by $m\otimes \lambda$ the product measure on $\CX_\Gamma \times \IR$.
\begin{Lemma}
Fix $\delta,\tau \in (0,1)$. Then for any nonnegative local supersolution $u$ of the equation $-\Delta u = \frac{\partial}{\partial t} u$ in $Q = (s, s+r^2) \times B_r(x)$, there exists a constant $c=c(u,\tau)$ such that, for all $\mu >0$,
\[ (m\otimes \lambda)( \{ (t,z) \in K_+  :  \log u < -\mu -c \}) \leq C r^2 m(B_r(x)) \mu^{-1}\]
and
\[ (m\otimes \lambda)( \{ (t,z) \in K_-  :  \log u > \mu -c \}) \leq C r^2 m(B_r(x)) \mu^{-1}\]
where $K_+ = (s+\tau r^2, s+r^2) \times B_{(1-\delta)r}(x)$ and $K_- = (s, s+ \tau r^2)\times B_{(1-\delta)r}(x)$. Here the constant $C$ is independent of $\mu >0$, $u$, $s$ and of the ball of radius $0<r<L$.
\end{Lemma}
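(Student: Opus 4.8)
The plan is to follow the standard Moser-type argument that turns the weighted Poincar\'e inequality into a logarithmic estimate for supersolutions. The key idea is to test the weak equation against $\psi^2/u$, which produces both the energy of $\log u$ and the time-derivative of the spatial average of $\log u$, and then to exploit the weighted Poincar\'e inequality to control oscillation.

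First I would fix a nonnegative local supersolution $u$ and, since the Harnack inequality is insensitive to scaling $u$ by a positive constant, replace $u$ by $u+\epsilon$ so that $u$ is bounded below by a positive constant on the relevant cylinder (letting $\epsilon\to 0$ at the end). Set $w:=-\log u$; the goal is to bound the measure of the super-level sets of $w$ on $K_+$ and the sub-level sets on $K_-$. To this end I would insert the test function $\phi=\psi^2/u$, with the cutoff $\psi$ from the previous lemma, into the supersolution inequality \eqref{1}. The spatial term $\int_J\CE(u,\psi^2/u)\,dt$ expands, via the chain and product rules for the energy measure $d\Gamma$, into a favorable term $\int\psi^2\,d\Gamma(\log u)$ minus a cross term involving $\psi\,d\psi$; by Young's inequality the cross term is absorbed, leaving a lower bound of the form $\tfrac12\int\psi^2\,d\Gamma(w)-C\int d\Gamma(\psi)$. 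The time term $\int_J(\partial_t u,\psi^2/u)\,dt=-\int_J(\partial_t w,\psi^2)\,dt$ is exactly the time-derivative of the weighted average $W(t):=\int w_t\,\psi^2\,dm/\int\psi^2\,dm$.

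Combining these, the supersolution inequality yields a differential inequality of the shape
\[
\frac{d}{dt}W(t)+\frac{c}{r^2\,m(B_r(x))}\int_{B_r(x)}\psi^2\,d\Gamma(w_t)\le 0,
\]
so that $\int\psi^2\,d\Gamma(w_t)$ is integrable in $t$ and $W$ is essentially nonincreasing. Applying the weighted Poincar\'e inequality gives $\int|w_t-W(t)|\,\psi^2\,dm\le C_{P}'\,r^2\int\psi^2\,d\Gamma(w_t)$, which couples the spatial oscillation of $w$ to $-\tfrac{d}{dt}W$. Choosing $c:=W(s+\tau r^2)$ (this is the origin of the $c=c(u,\tau)$ in the statement), one integrates the differential inequality forward on $K_+$ and backward on $K_-$. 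For instance on $K_+$, for $t>s+\tau r^2$ we have $W(t)\le c$, and the accumulated gradient term bounds $\iint_{K_+}|w-c|\,\psi^2\,dm\,dt$ by $C\,r^2 m(B_r(x))$; Chebyshev's inequality then converts this $L^1$ bound into the claimed measure estimate
\[
(m\otimes\lambda)(\{w>\mu-c\}\cap K_+)\le C\,r^2 m(B_r(x))\,\mu^{-1},
\]
using that $\psi\equiv 1$ on the inner ball $B_{(1-\delta)r}(x)$. The estimate on $K_-$ is symmetric, integrating backward in time.

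\textbf{The main obstacle} is the rigorous handling of the time term and the admissibility of the test function $\psi^2/u$. Since $u$ lies only in $\CW_\loc(Q)$ and $\partial_t u$ is merely a $W^{-1,2}$-valued distribution, the pairing $\int_J(\partial_t u,\psi^2/u)\,dt$ and its identification with $-\tfrac{d}{dt}W$ require a mollification-in-time argument (e.g. Steklov averaging) together with the chain rule $d\Gamma(\log u)=u^{-2}\,d\Gamma(u)$ for the energy measure, justified by the regularization $u+\epsilon$ that keeps $u^{-1}$ bounded. Verifying that $\psi^2/u\in\CW_o(Q)$ and that the integration by parts in time carries no boundary contribution on $\partial I\times G$ (which is precisely why $\CW_o$ allows nonvanishing top and bottom) is the delicate technical point; everything else is a direct transcription of the Dirichlet-space arguments of Sturm \cite{Sturm-96} and Saloff-Coste \cite{SaloffCoste-95}, using the weighted Poincar\'e inequality already established.
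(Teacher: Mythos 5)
Your setup is the standard one, and it is the route the paper itself points to: the paper gives no proof of this lemma (it belongs to the survey in section 4.3 and defers to \cite{Sturm-96} and \cite{SaloffCoste-95}), and the argument in those references is exactly what you begin to reconstruct -- test with $\psi^2/u$, derive a differential inequality for the weighted mean $W(t)$ of $w=-\log u$, invoke the weighted Poincar\'e inequality, choose $c=W(s+\tau r^2)$. Up to that point your sketch is sound, including the technical remarks on admissibility. The gap is in your last step. You claim that integrating the differential inequality yields the uniform bound $\iint_{K_+}|w-c|\,\psi^2\,dm\,dt\le C\,r^2m(B_r(x))$ and then apply Chebyshev. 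That bound is false, and your own inequality does not produce it: integrating in time over $(s+\tau r^2,s+r^2)$ only gives
\[
\frac{a}{r^2 m(B_r(x))}\int\limits_{s+\tau r^2}^{s+r^2}\int\limits_{B_r(x)}\psi^2\,d\Gamma(w_t)\,dt\;\le\; C+\bigl(c-W(s+r^2)\bigr),
\]
and $c-W(s+r^2)$ depends on $u$ and is unbounded. Concretely, for any $\lambda>0$ the function $u(t,z)=e^{\lambda t}$ is a nonnegative local supersolution (it is constant in space, and $(\partial_t u,\phi)=\lambda e^{\lambda t}\int\phi\,dm\ge 0$ for $\phi\ge 0$); here $w=-\lambda t$, $c=-\lambda(s+\tau r^2)$, and
\[
\iint\limits_{K_+}|w-c|\,dm\,dt=\tfrac{1}{2}\lambda(1-\tau)^2r^4\,m\bigl(B_{(1-\delta)r}(x)\bigr)\longrightarrow\infty \quad (\lambda\to\infty),
\]
while both level sets in the lemma are empty for this $u$. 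The lemma is a weak-$L^1$ (distribution-function) estimate; it cannot be reached via a strong $L^1$ bound plus Chebyshev because no such bound holds uniformly in $u$.

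The factor $\mu^{-1}$ has a different origin, namely Moser's ODE trick, which works pointwise on the level sets rather than through an integrated oscillation bound. For $t>s+\tau r^2$ set $f(t)=\mu+c-W(t)$ and $A_t=\{z\in B_{(1-\delta)r}(x):w(t,z)>c+\mu\}$. Since $\psi\equiv 1$ on $B_{(1-\delta)r}(x)$ and $W(t)\le c$ there, on $A_t$ one has $|w_t-W(t)|\ge f(t)\ge\mu$, so the weighted Poincar\'e inequality in its $L^2$ form turns the differential inequality into $f'(t)\ge a\,(r^2m(B_r(x)))^{-1}\,m(A_t)\,f(t)^2$. Dividing by $f^2$ and integrating, using $f(s+\tau r^2)=\mu$,
\[
\frac{a}{r^2m(B_r(x))}\int\limits_{s+\tau r^2}^{s+r^2}m(A_t)\,dt\;\le\;\frac{1}{f(s+\tau r^2)}-\frac{1}{f(s+r^2)}\;\le\;\frac{1}{\mu},
\]
which is the claim on $K_+$; the estimate on $K_-$ is symmetric, integrating backward in time. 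Two further points. First, this argument requires the weighted Poincar\'e inequality with $|w_t-W(t)|^2$ on the left-hand side; the first-power version you quoted (which is how the paper literally states it, evidently a typo) would only control $-\tfrac{d}{dt}\log f$, which integrates to the logarithm of a $u$-dependent quantity and is not enough. Second, your displayed differential inequality silently drops the drift term $C\int d\Gamma(\psi)\sim C\,m(B_r(x))(\delta r)^{-2}$ that your own computation produces, so $W$ is only nonincreasing after subtracting a linear function of $t$; this is harmless (the resulting shift, at most of size $C(1-\tau)\delta^{-2}$, is absorbed into $c$ or $\mu$), but it must be accounted for since the monotonicity of $W$ is what you use to compare $W(t)$ with $c$ on both cylinders.
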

Combining this estimate on $\log u$ with the supersolution estimates, one obtains via the so-called abstract lemma, see \cite{SaloffCoste-02}, the following weak Harnack inequality.
\begin{Theorem}
Let $0< p < 1 + \frac{2}{\nu}$, $0<\epsilon<\eta<\sigma<1$ and $0<\zeta<1$. Then any positive local supersolution $u$ of $-\Delta u = \frac{\partial}{\partial t} u$ in $Q=(s, s+r^2) \times B_r(x)$, satisfies
\[ \Bigl(\iint\limits_{Q_-} |u|^p \: dmdt \Bigr)^\frac{1}{p} \leq C (r^2m(B_r(x)))^\frac {1}{p} \inf_{Q_+} u,\]
where
\[Q_- = (s+\epsilon r^2, s+ \eta r^2)\times B(x,\zeta r) \mbox { and } Q_+ = (s+\sigma r^2, s+ r^2) \times B(x,\zeta r).\]
Here the constant $C$ is independent of $u$, $s$ and of the ball $B_r(x)$ of radius $0<r<L$.
\end{Theorem}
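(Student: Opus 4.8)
The plan is to carry out the final synthesis of the Moser scheme: the weak Harnack inequality is obtained by feeding the mean-value estimates for supersolutions and the logarithmic estimates into the abstract (Bombieri--Giusti) lemma of \cite{SaloffCoste-02}. Every estimate in play already carries its explicit dependence on $r$ and $s$, and the equation is autonomous in $t$, so there is no need to rescale the graph (which would in any case distort the edge lengths); I would simply use time-translation to put $s=0$ and keep $r$ fixed, noting that volume doubling renders $(m\otimes\lambda)(K_\pm)$ comparable to $r^2m(B_r(x))$. The decisive structural point is that the logarithmic Lemma furnishes a \emph{single} constant $\bar c=c(u,\tau)$ controlling $\log u$ from below on the late slab $K_+$ and from above on the early slab $K_-$ at once; this parabolic John--Nirenberg phenomenon is what ultimately lets the unknown additive normalization cancel. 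I would fix an auxiliary $\tau\in(\eta,\sigma)$ and a small $\delta>0$ with $\zeta<1-\delta$, so that $Q_-\subset K_-$ and $Q_+\subset K_+$.

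\emph{Negative powers on the upper cylinder.} Put $f:=(e^{\bar c}u)^{-1}$. The $K_+$ half of the logarithmic Lemma says precisely that $(m\otimes\lambda)(\{\log f>\mu\}\cap K_+)\le C\,r^2m(B_r(x))\,\mu^{-1}$ for all $\mu>0$, which is the weak-$L^1$ hypothesis of the abstract lemma, while the supersolution estimate for negative powers supplies the companion $L^\infty$-versus-$L^p$ bound $\sup_{Q_\delta}f^p\le C\delta^{-\nu-2}(r^2m(B_r(x)))^{-1}\iint f^p$ (the $e^{\mp\bar c}$ factors cancel, so the constants are unchanged). The abstract lemma then yields $\sup_{Q_+}f\le C$, and since $u>0$ gives $\inf_{Q_+}u=(\sup_{Q_+}u^{-1})^{-1}$, this reads
\[ e^{\bar c}\le C\,\inf_{Q_+}u. \]

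\emph{Positive powers on the lower cylinder.} Symmetrically, put $g:=e^{-\bar c}u$. The $K_-$ half of the logarithmic Lemma bounds the super-level sets of $\log u$, hence a fortiori those of $\log g=\log u-\bar c$, again giving the weak-$L^1$ hypothesis; and the self-improving reverse-H\"older estimate for supersolutions --- valid exactly in the range $0<p<1+\tfrac{2}{\nu}$ of the theorem --- provides the other ingredient. A second application of the abstract lemma produces an $L^{p}$-bound on the early slab,
\[ \Bigl(\iint_{Q_-}g^{p}\,dm\,dt\Bigr)^{1/p}\le C\,(r^2m(B_r(x)))^{1/p}, \]
that is $\bigl(\iint_{Q_-}u^{p}\,dm\,dt\bigr)^{1/p}\le C\,e^{\bar c}\,(r^2m(B_r(x)))^{1/p}$.

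Combining the two displays eliminates $e^{\bar c}$ and gives exactly $\bigl(\iint_{Q_-}u^{p}\,dm\,dt\bigr)^{1/p}\le C\,(r^2m(B_r(x)))^{1/p}\,\inf_{Q_+}u$; as the constants entering both applications depend only on $\nu$, $c_P$ and the fixed parameters $\epsilon,\eta,\sigma,\zeta$ (through $\tau,\delta$), so does the final $C$. The genuinely hard analytic step is the abstract lemma itself, but it is available off the shelf from \cite{SaloffCoste-02}, so in our setting the remaining work is essentially bookkeeping: arranging the geometry of the cylinders $Q_\pm,K_\pm$ so the inclusions hold, and checking that the $\delta^{-\nu-2}$ blow-up of the mean-value estimates is exactly the polynomial growth the lemma tolerates and that the weak-$L^1$ decay in $\mu$ matches its hypothesis. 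I expect no difficulty beyond this verification.
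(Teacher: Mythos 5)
Your proposal follows exactly the route the paper itself indicates (the paper gives only a one-sentence proof): feed the logarithmic estimates and the supersolution estimates into the abstract Bombieri--Giusti lemma of \cite{SaloffCoste-02}, applied once with negative exponents on the late slab $K_+$ and once with positive exponents in the range $0<p<1+\tfrac{2}{\nu}$ on the early slab $K_-$, the common normalization constant $c(u,\tau)$ cancelling at the end. Your geometric arrangement ($\tau\in(\eta,\sigma)$, $\delta<1-\zeta$, so that $Q_+\subset K_+$ and $Q_-\subset K_-$) and your identification of the two hypotheses of the abstract lemma are both correct.

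There is, however, a sign slip in the normalization, and as written two of your intermediate claims do not follow from the logarithmic lemma as stated in the paper. That lemma controls the sets $\{(t,z)\in K_+:\log u<-\mu-c\}$ and $\{(t,z)\in K_-:\log u>\mu-c\}$, so the correctly normalized function is $w:=e^{c}u$ on \emph{both} slabs. On $K_+$ your choice $f=(e^{c}u)^{-1}=w^{-1}$ is right, since $\{\log f>\mu\}=\{\log u<-\mu-c\}$; but the conclusion $\sup_{Q_+}f\le C$ then reads $e^{-c}\le C\inf_{Q_+}u$, not $e^{c}\le C\inf_{Q_+}u$ as you wrote. On $K_-$ you must take $g=w=e^{c}u$, for then $\{\log g>\mu\}=\{\log u>\mu-c\}$ is exactly the set the lemma controls; your choice $g=e^{-c}u$ gives $\{\log g>\mu\}=\{\log u>\mu+c\}$, and your ``a fortiori'' inclusion into $\{\log u>\mu-c\}$ holds only when $c\ge 0$, which is not guaranteed (the constant is a weighted mean of $-\log u$ over a time slice and can have either sign). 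With the consistent normalization the second application of the abstract lemma yields
\[
\Bigl(\iint\limits_{Q_-}u^{p}\,dm\,dt\Bigr)^{1/p}\le C\,e^{-c}\,(r^2m(B_r(x)))^{1/p},
\]
and the factor $e^{-c}$ cancels against $e^{-c}\le C\inf_{Q_+}u$, giving the theorem. Your two sign errors compensate each other, so the argument is structurally sound; just fix the normalization so that each displayed step actually follows from the lemma as stated.
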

As mentioned already, this together with the subsolution estimate finally yields the Harnack inequality.
\subsection{Consequences}
In this subsection we collect some well known consequences of the Harnack inequality to underline its power. Throughout we assume that the volume doubling and the Poincaré inequality are satisfied local uniform. In the case that both properties are just satisfied locally, one has to change the assumptions in an obvious way. Note that in that case all intrinsic constants depend heavily on the subset $Y$. For the proofs we refer to chapter 5 of \cite{SaloffCoste-02}.\\
We start with the following Hölder estimates for local solutions of the heat equation.
\begin{Proposition}\label{Hoelder}
There exists constants $\alpha \in (0,1)$ and $C>0$ such that for all $0<2r<L$ and $T\in \IR$ we have
\[|u(s,y)-u(t,x)| \leq C \sup_{Q} |u| \cdot \left( \frac{|s-t|^\frac{1}{2} + d(y,z)}{r}\right)^\alpha\]
for all local solutions $u$ of $-\Delta u = \frac{\partial}{\partial t} u$ on $Q = (T-4r^2, T ) \times B_{2r}(x)$ and $(s,y), (t,z)\in (T-r^2,T)\times B_r(x)$.
\end{Proposition}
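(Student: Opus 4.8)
The plan is to derive this H\"older estimate from the parabolic Harnack inequality of Theorem \ref{HI} by the classical oscillation-decay argument of Moser, exactly as carried out for strongly local Dirichlet spaces in \cite{Sturm-96} and in Chapter 5 of \cite{SaloffCoste-02}. The bounded geometry assumption guarantees that the Harnack constant $c_H$ in \eqref{PHI} is uniform over all scales $0<r<L$, which is precisely what makes the resulting iteration quantitative.

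First I would establish an oscillation lemma. Fix a local solution $u$ on a cylinder $Q$ and write $M=\sup_Q u$ and $m=\inf_Q u$. Since the heat equation $-\Delta u=\frac{\partial}{\partial t} u$ is linear and constants are solutions, both $M-u$ and $u-m$ are nonnegative local solutions, so Theorem \ref{HI} applies to each. Writing \eqref{PHI} for $M-u$ gives $M-\inf_{Q_-}u\le c_H(M-\sup_{Q_+}u)$, and for $u-m$ gives $\sup_{Q_-}u-m\le c_H(\inf_{Q_+}u-m)$. Solving these for $\sup_{Q_+}u$ and $\inf_{Q_+}u$, subtracting, and discarding the nonnegative term $\operatorname{osc}_{Q_-}u$, the mixed quantities cancel and one is left with
\[\operatorname{osc}_{Q_+}u\;\le\;\bigl(1-\tfrac{1}{c_H}\bigr)\,\operatorname{osc}_{Q}u.\]
Thus the oscillation contracts by the fixed factor $\gamma:=1-\tfrac{1}{c_H}\in(0,1)$ when passing from $Q$ to the smaller concentric future cylinder $Q_+$, which is a rescaling of $Q$ by a fixed geometric ratio in the parabolic scaling (time $\sim$ length$^2$).

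Next I would iterate. Choosing a nested family of parabolic cylinders $Q^{(0)}\supset Q^{(1)}\supset\cdots$ shrinking to a point, each obtained from its predecessor by the same fixed rescaling $\theta\in(0,1)$, the oscillation lemma yields $\operatorname{osc}_{Q^{(k)}}u\le\gamma^{k}\operatorname{osc}_{Q^{(0)}}u\le 2\gamma^{k}\sup_Q|u|$. Setting $\alpha:=\log(1/\gamma)/\log(1/\theta)>0$, this geometric decay converts into a H\"older modulus: for two points $(s,y),(t,z)\in(T-r^2,T)\times B_r(x)$ with parabolic distance $\delta:=|s-t|^{1/2}+d(y,z)$, one picks the index $k$ with $\theta^{k+1}r<\delta\le\theta^{k}r$, observes that both points then lie in $Q^{(k)}$, and concludes
\[|u(s,y)-u(t,z)|\le\operatorname{osc}_{Q^{(k)}}u\le 2\gamma^{k}\sup_Q|u|\le C\Bigl(\tfrac{\delta}{r}\Bigr)^{\alpha}\sup_Q|u|,\]
which is the asserted estimate.

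The main obstacle is the causality (time-directedness) of the parabolic Harnack inequality: because $Q_-$ sits in the past and $Q_+$ in the future of $Q$, the oscillation contracts only on the forward cylinder, and one must arrange the nested family $Q^{(k)}$ so that each shrunken cylinder is simultaneously a forward cylinder of its predecessor and still contains the two query points. Keeping the spatial balls centred and the whole family inside the domain $Q=(T-4r^2,T)\times B_{2r}(x)$ is what consumes the factor $2$ of room built into the statement, and one has to treat on an equal footing the case where $(s,y)$ and $(t,z)$ are separated mainly in time versus mainly in space — this is exactly what the parabolic distance $|s-t|^{1/2}+d(y,z)$ encodes. All constants remain uniform in $r$, $s$ and $x$ by the scale-invariance of \eqref{PHI} guaranteed under the bounded geometry hypothesis.
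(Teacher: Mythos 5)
Your proposal is correct and follows exactly the route the paper intends: the paper gives no proof of Proposition \ref{Hoelder} itself but refers to Chapter 5 of \cite{SaloffCoste-02}, where the H\"older estimate is derived from the parabolic Harnack inequality \eqref{PHI} by precisely your oscillation-decay argument (apply Theorem \ref{HI} to $M-u$ and $u-m$, obtain $\operatorname{osc}_{Q_+}u\le(1-\tfrac{1}{c_H})\operatorname{osc}_Q u$, and iterate over nested forward cylinders). Your handling of the time-directedness and of the uniformity of $c_H$ over scales $0<r<L$ matches the standard treatment, so there is nothing to add.
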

If one considers time independent solutions one directly deduces the elliptic Harnack inequality:
\begin{Proposition}
There exists a constant $C>0$ such that for all $0<2r<L$ we have
\[\sup_{B_r(x)} u \leq C \inf_{B_r(x)} u\]
for all nonnegative local solutions $u$ of $\Delta u = 0$ on $B_{2r}(x)$. Furthermore if $Y\subset\CX_\Gamma$ is relatively compact, then there exists a constant $C_Y>0$ such that
\[\sup_{Y} u \leq C_Y \inf_{Y} u\]
for all nonnegative local solutions $u$ of $\Delta u= 0$ on $Y$.
\end{Proposition}
\begin{proof}
The first statement follows easily from the parabolic Harnack inequality. The second statement follows by a chain of balls argument.\end{proof}
\begin{remark}{\rm
For a different proof of the elliptic Harnack inequality see also \cite{Haeseler-08}, which uses the Moser iteration technique developed in \cite{Moser-61}.}
\end{remark}
Clearly, by Proposition \ref{Hoelder}, also Hölder estimates for solutions of the time independent equation hold true.
\begin{Proposition}
There exists constants $\alpha \in (0,1)$ and $C>0$ such that for all $0<2r<L$ we have
\[|u(y)-u(z)| \leq C \sup_{B_{2r}(x)} |u| \left(\frac{|y-z|}{r}\right)^\alpha\]
for all nonnegative local solutions $u$ of $\Delta u = 0$ on $B_{2r}(x)$ and $y,z\in B_r(x)$.
\end{Proposition}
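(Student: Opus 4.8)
The plan is to obtain this purely elliptic H\"older estimate as a special case of the parabolic one in Proposition \ref{Hoelder}, by viewing a time-independent solution of $\Delta u = 0$ as a (trivially time-dependent) local solution of the heat equation and then evaluating the parabolic estimate at equal time coordinates.

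First I would check that every weak solution of $\Delta u = 0$ on $B_{2r}(x)$ induces a local parabolic solution on a cylinder over $B_{2r}(x)$. Fix $T\in\IR$, set $Q = (T-4r^2,T)\times B_{2r}(x)$, and define $\tilde u(t,y):=u(y)$, constant in time. Since $u\in W^{1,2}(B_{2r}(x))$ and $\tilde u$ does not depend on $t$, its generalized time derivative vanishes, $\tfrac{\partial}{\partial t}\tilde u = 0 \in L^2(I\to W^{-1,2}(\CX_\Gamma))$, so $\tilde u\in\CW_\loc(Q)$. For any nonnegative $\phi\in\CW_o(Q)$ and any relatively compact open $J\subset I$, Definition \ref{31} gives
\[\CE_J(\tilde u,\phi) = \int\limits_J \CE(u,\phi_t)\: dt + \int\limits_J \bigl(\tfrac{\partial}{\partial t}\tilde u,\phi_t\bigr)\: dt = \int\limits_J \CE(u,\phi_t)\: dt.\]
Because $\phi_t$ is, for a.e.\ $t$, an admissible spatial test function supported in $B_{2r}(x)$, the elliptic weak equation $\CE(u,\phi_t)=0$ holds, whence $\CE_J(\tilde u,\phi)=0$. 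Thus $\tilde u$ is a local solution of $-\Delta u = \tfrac{\partial}{\partial t}u$ on $Q$.

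Then I would simply apply Proposition \ref{Hoelder} to $\tilde u$ on $Q$. Given $y,z\in B_r(x)$, choose any fixed time $s\in(T-r^2,T)$ and evaluate at the points $(s,y),(s,z)\in (T-r^2,T)\times B_r(x)$. Since $\tilde u$ is time-independent we have $\sup_Q|\tilde u| = \sup_{B_{2r}(x)}|u|$ and the temporal contribution $|s-s|^{1/2}$ vanishes, so
\[|u(y)-u(z)| = |\tilde u(s,y)-\tilde u(s,z)| \leq C\,\sup_{B_{2r}(x)}|u|\cdot\Bigl(\frac{d(y,z)}{r}\Bigr)^\alpha,\]
where $|y-z|=d(y,z)$ is the graph distance. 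This is exactly the asserted bound, with the same exponent $\alpha\in(0,1)$ and constant $C$ as in Proposition \ref{Hoelder}.

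There is no serious obstacle here; the argument is essentially a reduction. The only point demanding a little care is the verification that the time-independent extension $\tilde u$ genuinely belongs to $\CW_\loc(Q)$ and satisfies the weak parabolic formulation of Definition \ref{31}, that is, that constancy in time is compatible with the Bochner-measurability and duality requirements built into $\CW$. Once this routine check is in place, the H\"older continuity is inherited verbatim from the parabolic estimate; note, as in Proposition \ref{Hoelder}, the nonnegativity of $u$ is not actually used in the argument.
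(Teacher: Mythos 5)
Your proposal is correct and is exactly the paper's argument: the paper proves this proposition solely by the remark ``Clearly, by Proposition \ref{Hoelder}, also H\"older estimates for solutions of the time independent equation hold true,'' i.e.\ by viewing a harmonic function as a time-independent local solution of the heat equation and invoking the parabolic H\"older estimate. Your write-up merely makes explicit the routine verification (that the constant-in-time extension lies in $\CW_\loc(Q)$ and satisfies the weak parabolic formulation) that the paper leaves implicit.
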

\begin{Proposition}
Assume now that the volume doubling and the Poincaré inequality hold uniformly in $r>0$ on $\CX_\Gamma$. Then $\CX_\Gamma$ has the strong Liouville property, that is any solution of $\Delta u=0$ on $\CX_\Gamma$ that is bounded from below (or above) is constant. Moreover, there exists an $\alpha >0$ such that any solution $u$ of $\Delta u =0$ on $\CX_\Gamma$ with the property
\[\lim_{r\to \infty} \frac{1}{r^\alpha} \sup_{B_r(x)} |u|=0\]
for some $x\in \CX_\Gamma$, must be constant.
\end{Proposition}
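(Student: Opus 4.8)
The plan is to deduce both assertions from the global versions of the elliptic Harnack inequality and of the H\"older estimate of Proposition \ref{Hoelder}. Under the present hypothesis---volume doubling and the Poincar\'e inequality holding uniformly for \emph{all} $r>0$---these are available with a Harnack constant, a H\"older constant and a H\"older exponent that are genuinely independent of the radius, i.e.\ with $L=\infty$. Once this is in hand, both statements follow from short limit arguments.

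For the strong Liouville property, suppose $u$ solves $\Delta u=0$ on $\CX_\Gamma$ and is bounded from below; set $m:=\inf_{\CX_\Gamma}u$ and $v:=u-m\geq 0$, a nonnegative global solution. Fixing a base point $x_0$, the elliptic Harnack inequality gives $\sup_{B_r(x_0)}v\leq C\inf_{B_r(x_0)}v$ for every $r>0$, with $C$ independent of $r$. First I would note that $\inf_{B_r(x_0)}v$ decreases to $\inf_{\CX_\Gamma}v=0$ as $r\to\infty$: for any $\epsilon>0$ there is a point $w$ with $v(w)<\epsilon$, and $w\in B_r(x_0)$ once $r>d(x_0,w)$, since the graph is connected. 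Then for a fixed $\rho$ and any $r>\rho$ one has $\sup_{B_\rho(x_0)}v\leq\sup_{B_r(x_0)}v\leq C\inf_{B_r(x_0)}v$, and letting $r\to\infty$ forces $\sup_{B_\rho(x_0)}v\leq 0$, hence $v\equiv 0$ on every ball and $u\equiv m$. The case of $u$ bounded above is handled by applying this to $-u$.

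For the polynomial growth statement I would take $\alpha\in(0,1)$ to be the H\"older exponent furnished by Proposition \ref{Hoelder}. A time-independent solution of $\Delta u=0$ is in particular a (signed) local solution of $-\Delta u=\frac{\partial}{\partial t}u$, so the parabolic H\"older estimate applies; choosing equal time arguments $s=t$ makes the time term vanish and yields, for any $y,z$ and all $r$ large enough that $y,z\in B_r(x)$,
\[
|u(y)-u(z)|\leq C\,\sup_{B_{2r}(x)}|u|\left(\frac{d(y,z)}{r}\right)^\alpha
= C\,d(y,z)^\alpha\,\frac{\sup_{B_{2r}(x)}|u|}{r^\alpha}.
\]
Since $\sup_{B_{2r}(x)}|u|/r^\alpha=2^\alpha\,\sup_{B_{2r}(x)}|u|/(2r)^\alpha\to 0$ by the growth hypothesis, letting $r\to\infty$ gives $u(y)=u(z)$ for all $y,z$, so $u$ is constant.

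The analytic substance is already packaged in the uniform Harnack and H\"older estimates, so the remaining arguments are essentially bookkeeping. The main obstacle is therefore the passage to $L=\infty$: one must be sure that under the \emph{globally} uniform (rather than merely local-uniform) doubling and Poincar\'e hypotheses the constant $C$ and the exponent $\alpha$ do not degenerate as $r\to\infty$. This is precisely the point of the uniform-in-$r$ assumption, which is what allows the chain-of-balls and the two limit arguments to run without any radius-dependent loss.
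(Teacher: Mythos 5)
Your proof is correct, and it follows exactly the route the paper intends: the paper itself gives no argument but refers to Chapter 5 of \cite{SaloffCoste-02}, where the strong Liouville property is deduced from the scale-uniform elliptic Harnack inequality by the same limiting argument you give, and the polynomial-growth rigidity from the uniform H\"older estimate with $\alpha$ taken to be the H\"older exponent. Your explicit attention to the passage $L=\infty$ (i.e.\ that the Harnack and H\"older constants do not degenerate as $r\to\infty$ under the globally uniform doubling and Poincar\'e hypotheses) is precisely the point that makes both limit arguments legitimate.
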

The next consequence of Harnack's inequality improves the upper Gaussian bound of the heat kernel from section 2.2. Moreover, together with the stated lower bound, this estimates is in fact equivalent to the local uniform parabolic Harnack inequality and hence to the local uniform volume doubling and the Poincaré inequality. For the proof we refer to \cite{Sturm-95,Sturm-96} (see \cite{SaloffCoste-95} as well).
\begin{Proposition}
There exists constants $c_1,c_2,C_1,C_2 >0$ such that for all $x,y\in \CX_\Gamma$ and all $0<t<L^2$ we have
\[\frac{c_1}{m(B_{\sqrt{t}}(x))} \exp {\left(- \frac{d(x,y)^2}{C_1t}\right)} \leq p(t,x,y) \leq \frac{c_2}{m(B_{\sqrt{t}}(x))} \exp{\left(- \frac{d(x,y)^2}{C_2t}\right)}.\]
\end{Proposition}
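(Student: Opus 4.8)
The statement is the well-known equivalence between the parabolic Harnack inequality and two-sided Gaussian heat kernel bounds in the theory of strongly local Dirichlet forms. The plan is to derive both inequalities from the parabolic Harnack inequality (Theorem \ref{HI}) together with the local uniform volume doubling property and the identification $\rho = d$ from Proposition \ref{intrinsic}. Since locally $m(B_{\sqrt t}(x)) \asymp \sqrt t$ up to factors controlled by the degree bound $D$, all estimates are restricted to the range $0<t<L^2$ on which volume doubling is available. Throughout, $p(t,\cdot,y)$ is used as a nonnegative local solution of the heat equation on every parabolic cylinder avoiding the initial time.

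For the lower bound I would proceed in three steps. First, an on-diagonal lower bound $p(t,x,x) \ge c\, m(B_{\sqrt t}(x))^{-1}$: applying (PHI) to the caloric function $(t,z)\mapsto p(t,x,z)$ on a cylinder of width $\sqrt t$ relates $p(t,x,x)$ to a space-time average of $p$, which is bounded below using stochastic completeness (noted in the remark of Section 4.1, $T_t 1 =1$) together with the semigroup identity $p(t,x,x)=\int p(t/2,x,z)^2\,dm(z)$. Second, a near-diagonal lower bound $p(t,x,y)\ge c\,m(B_{\sqrt t}(x))^{-1}$, valid whenever $d(x,y)\le \kappa\sqrt t$, obtained from a single application of (PHI) joining $(t/2,x)$ to $(t,y)$. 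Third, the full Gaussian lower bound, produced by chaining: connect $x$ and $y$ by a chain of roughly $n\simeq d(x,y)^2/t$ points at mutual distance $\lesssim \sqrt{t/n}$, apply the near-diagonal estimate on each link with time step $t/n$, and multiply; the product of the $n$ per-step constants yields the factor $\exp(-d(x,y)^2/(C_1 t))$.

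For the upper bound, the on-diagonal estimate $p(t,x,x)\le c\, m(B_{\sqrt t}(x))^{-1}$ follows from the Nash inequality of Section 2.2 (equivalently a Faber--Krahn inequality) combined with volume doubling, as in \cite{Davies-89} and \cite{Sturm-95}. The off-diagonal Gaussian factor is then recovered by the Davies perturbation method: one conjugates the semigroup by $e^{\lambda\psi}$ with $\psi=\rho(\cdot,y)$, uses $|\psi'|\le 1$, which holds precisely because $\rho$ is the intrinsic metric (Proposition \ref{intrinsic}), to control the perturbed form, optimizes over $\lambda$, and combines the resulting integrated Gaffney--Davies bound with the on-diagonal estimate and doubling.

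The main obstacle is the chaining argument for the lower bound: one must verify that the number of links, the sizes of the intermediate balls, and the per-step Harnack constants can be chosen uniformly in $x,y$ and $0<t<L^2$ so that their product telescopes into a clean Gaussian, and that volume doubling permits replacing $m(B_{\sqrt{t/n}})$ by $m(B_{\sqrt t}(x))$ at the cost of a factor absorbed into $\exp(-d(x,y)^2/(C_1 t))$. Rather than reproduce this machinery, the cleanest route is to observe that the local uniform volume doubling (the Corollary in Section 3) and the local uniform Poincar\'e inequality (the Corollary in Section 3) are exactly the hypotheses under which Sturm's equivalence theorem applies, so the two-sided bound follows directly from \cite{Sturm-95, Sturm-96}; see also \cite{SaloffCoste-95, SaloffCoste-02}.
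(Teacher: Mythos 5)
Your proposal is correct and ultimately takes the same route as the paper: after sketching the standard machinery (chaining for the Gaussian lower bound, Davies' method for the upper bound), you conclude by invoking Sturm's equivalence theorem from \cite{Sturm-95,Sturm-96} on the strength of the local uniform volume doubling and Poincar\'e inequalities established in Section 3, which is precisely what the paper does (it simply refers to \cite{Sturm-95,Sturm-96} and \cite{SaloffCoste-95}). The extra detail you supply about the internal structure of that equivalence is accurate but not needed beyond the citation.
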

\section{Examples}
In this section, we have a look at examples which appear frequently in the literature.
\subsection{Generalized star graphs}
We call a graph $\CX_\Gamma$ a generalized star graph if we can decompose the edge set
\[ E = E_f \cup \bigcup E_i\]
where the set $E_f$ is finite set of edges of finite length and the sets $E_i$ give a finite set of the half-lines, see fig. 1 for a star center $E_f$. Hence the graph is of linear volume growth and the quantity $\sup\limits_{x,r} \tfrac{m(B_r(x))}{r}$ appearing in Poincaré's inequality exists. Moreover, the Harnack inequality holds uniform in $r>0$ and $x\in \CX_\Gamma$. These graphs where studied in \cite{KostrykinS-99}, \cite{KostrykinS-06} and \cite{KostrykinPS-07}.
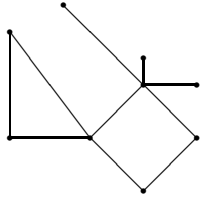
\begin{figure}
\setlength{\unitlength}{1pt}
\begin{center}
\begin{picture}(90,63)(-10,-17)
\put(0,0){\circle*{2}}
\put(0,40){\circle*{2}}
\put(30,0){\circle*{2}}
\put(70,0){\circle*{2}}
\put(50,20){\circle*{2}}
\put(50,-20){\circle*{2}}
\put(20,50){\circle*{2}}
\put(50,30){\circle*{2}}
\put(70,20){\circle*{2}}
\put(0,0){\line(1,0){30}}
\put(0,0){\line(0,1){40}}
\put(0,40){\line(3,-4){30}}
\put(30,0){\line(1,1){20}}
\put(30,0){\line(1,-1){20}}
\put(50,20){\line(1,-1){20}}
\put(50,-20){\line(1,1){20}}
\put(50,20){\line(-1,1){30}}
\put(50,20){\line(0,1){10}}
\put(50,20){\line(1,0){20}}
\end{picture}
\end{center}
\caption{A compact graph}
\end{figure}
\subsection{The Euclidean lattice $\IZ^d$}
As a second family of examples we consider the lattices $\IZ^d$, $d\in \IN$. That is $V = \IZ^d$ and $n\sim m$ if $|n-m|_1=1$, where $|\cdot|_1$ denotes the $\ell_1$-norm in $\IR^d$. Moreover each edge has length one, see fig. 2. An easy calculation shows, that the volume doubling holds uniform. Rather obvious is, that the Poincaré inequality holds uniform as well. This was shown in \cite{PivarskiSaloffCoste-08} in a more general context of Cayley graphs associated with finitely generated groups of polynomial volume growth.\\
Consider now the metric graph $\tilde\IZ^d$, which has the same vertex and edge set. We then assign to every edge a length $l\in [l_-, l_+]$ with $l_-,l_+ >0$. We then define a quasi-isometry in a canonical way: denote by $\pi_e$ the homeomorphism mapping $\CX_e\subset \IZ^d$ to $[0,1]$, and denote by $\tilde{\pi}_e$ the homeomorphism mapping $\tilde{\CX}_e\subset \tilde{\IZ}^d$ to $[0,l_e]$. Let $L_e:[0,1]\to [0,l_e],\ x\mapsto l_e x$. The quasi-isometry $J$ is then defined as $J_e= \pi_e^{-1} \circ L_e^{-1} \circ \tilde{\pi}_e$. Hence $l_- d(x,y) \leq \tilde{d}(Jx,Jy) \leq l_+ d(x,y)$ and therefore we have $B_{l_- r}(x) \subset J(B_r(x)) \subset B_{l_+ r}(x)$. This implies that the volume doubling holds globally on $\tilde{\IZ}^d$, where the doubling constant depends now additionally on $l_-,l_+$. By the chain rule we see that the Poincaré inequality also holds uniform. Hence the Harnack inequality holds uniform and the asymptotic heat kernel behavior is the same as in $\IZ^d$.\\
\begin{figure}
\setlength{\unitlength}{1pt}
\begin{center}
\begin{picture}(60,60)(-12,-8)
\multiput(0,0)(0,10){5}{ \multiput(0, 0)(10, 0){5}{
  \put(0,0){\line(0, 1){10}}
  \put(0,0){\line(1, 0){10}}
  \put(0,0){\circle*{2}}
  \put(0,-1){\line(0, -1){10}}
  \put(-1,0){\line(-1, 0){10}}}}
\end{picture}
\end{center}
\caption{The metric graph $\IZ^2$}
\end{figure}

\subsection{Trees}
Typical examples of metric graphs which do not satisfy volume doubling or Poincaré inequality globally are trees. We call a metric graph a tree, if it contains no cycles, i.e. there exists no continuous mapping $\gamma:[0,1]\to \CX_\Gamma$ with $\gamma(0)=\gamma(1)$ and which is injective on $(0,1)$ (see Fig. 3 for a finite tree). Assume that the vertex degree is at least $2$ for all vertices $v\in V$ and that the edge lengths are equal to $1$. Then the volume is of exponentially growth. Moreover, the doubling property cannot hold uniformly. If we assume further that the vertex degrees are uniformly bounded from above, then we know by section 3 that the doubling property and the Poincaré inequality, and hence Harnack's inequality, hold locally uniform.
\begin{figure}
\begin{center}
\setlength{\unitlength}{1pt}
\begin{picture}(40,80)(10,-40)
\put(0,0){\line(1, 1){20}}     \put(0,0){\line(1, -1){20}}      \put(0,0){\circle*{2}}
\put(20,20){\line(2, 1){20}}   \put(20,20){\line(2, -1){20}}    \put(20,20){\circle*{2}}
\put(20,-20){\line(2, 1){20}}  \put(20,-20){\line(2, -1){20}}   \put(20,-20){\circle*{2}}
\put(40,30){\line(4, 1){20}}   \put(40,30){\line(4, -1){20}}    \put(40,30){\circle*{2}}
\put(40,10){\line(4, 1){20}}   \put(40,10){\line(4, -1){20}}    \put(40,10){\circle*{2}}
\put(40,-30){\line(4, 1){20}}  \put(40,-30){\line(4, -1){20}}   \put(40,-30){\circle*{2}}
\put(40,-10){\line(4, 1){20}}  \put(40,-10){\line(4, -1){20}}   \put(40,-10){\circle*{2}}
\end{picture}
\end{center}
\caption{A tree}
\end{figure}

\bigskip

\textbf{Acknowledgements.} The author takes the opportunity to thank Daniel Lenz for inspiring discussions, helpful suggestions and his guidance during this work. Moreover he would like to thank Peter Stollmann for introducing him to the topic which formed the basis on which this work is built.

\end{document}